\relax
\documentclass[letterpaper]{article} 
\usepackage{aaai21}  
\usepackage{times}  
\usepackage{helvet} 
\usepackage{courier}  
\usepackage[hyphens]{url}  
\usepackage{graphicx} 
\urlstyle{rm} 
\usepackage{natbib}  
\usepackage{caption} 
\frenchspacing  
\setlength{\pdfpagewidth}{8.5in}  
\setlength{\pdfpageheight}{11in}  
\pdfinfo{
/Title (Sublinear Classical and Quantum Algorithms for General Matrix Games)
/Author (Tongyang Li, Chunhao Wang, Shouvanik Chakrabarti, and Xiaodi Wu)
/TemplateVersion (2021.2)
}

\usepackage[utf8]{inputenc} 
\usepackage[T1]{fontenc}    
\usepackage{url}            
\usepackage{booktabs}       
\usepackage{amsfonts}       
\usepackage{nicefrac}       
\usepackage{microtype}      

\usepackage{amssymb,amsmath,amsthm,amsfonts}
\usepackage{mathtools}
\usepackage{stmaryrd}
\usepackage{enumitem}
\usepackage{graphicx}
\usepackage[font=small]{caption}
\usepackage{float}
\usepackage[ruled,vlined,linesnumbered]{algorithm2e}
\usepackage{physics}
\usepackage{footnote}
\usepackage{xcolor}

\usepackage{hyperref}

\setcounter{secnumdepth}{0} 

\newtheorem{theorem}{Theorem}

\newtheorem{lemma}{Lemma}
\newtheorem{proposition}{Proposition}

\newtheorem{corollary}{Corollary}

\newcommand{\eqn}[1]{(\ref{eqn:#1})}
\newcommand{\eq}[1]{(\ref{eq:#1})}
\newcommand{\rem}[1]{\hyperref[rem:#1]{Remark~\ref*{rem:#1}}}
\newcommand{\thm}[1]{\hyperref[thm:#1]{Theorem~\ref*{thm:#1}}}
\newcommand{\cor}[1]{\hyperref[cor:#1]{Corollary~\ref*{cor:#1}}}
\newcommand{\defn}[1]{\hyperref[defn:#1]{Definition~\ref*{defn:#1}}}
\newcommand{\lem}[1]{\hyperref[lem:#1]{Lemma~\ref*{lem:#1}}}
\newcommand{\prop}[1]{\hyperref[prop:#1]{Proposition~\ref*{prop:#1}}}
\newcommand{\fig}[1]{\hyperref[fig:#1]{Figure~\ref*{fig:#1}}}
\newcommand{\tab}[1]{\hyperref[tab:#1]{Table~\ref*{tab:#1}}}
\newcommand{\algo}[1]{\hyperref[algo:#1]{Algorithm~\ref*{algo:#1}}}
\newcommand{\fac}[1]{\hyperref[fac:#1]{Fact~\ref*{fac:#1}}}
\newcommand{\lin}[1]{\hyperref[lin:#1]{Line~\ref*{lin:#1}}}
\newcommand{\fnote}[1]{\hyperref[fnote:#1]{Footnote~\ref*{fnote:#1}}}

\newcommand{\vect}[1]{\ensuremath{\mathbf{#1}}}
\newcommand\vecc[1]{\mathbf{#1}}

\SetKwInput{KwInput}{Input}
\SetKwInput{KwOutput}{Output}

\def\>{\rangle}
\def\<{\langle}

\def\trans{^{\top}}

\newcommand{\B}{\mathbb{B}}

\newcommand{\R}{\mathbb{R}}
\newcommand{\C}{\mathbb{C}}

\newcommand{\E}{\mathbb{E}}

\newcommand{\A}{A}
\newcommand{\x}{x}

\let\var\relax
\DeclareMathOperator{\clip}{clip}
\DeclareMathOperator{\poly}{poly}
\DeclareMathOperator{\sgn}{sgn}

\DeclareMathOperator{\var}{Var}

\DeclareMathOperator{\nnz}{nnz}
\DeclareMathOperator{\SVM}{SVM}

\renewcommand{\d}{\mathrm{d}}

\newcommand{\range}[1]{[#1]}
\def \eps {\epsilon}

\let\oldnl\nl
\newcommand{\nonl}{\renewcommand{\nl}{\let\nl\oldnl}}

\makeatletter
\newcommand{\xMapsto}[2][]{\ext@arrow 0599{\Mapstofill@}{#1}{#2}}
\def\Mapstofill@{\arrowfill@{\Mapstochar\Relbar}\Relbar\Rightarrow}
\makeatother

\let\norm\relax
\let\abs\relax
\DeclarePairedDelimiter{\norm}{\lVert}{\rVert}
\DeclarePairedDelimiter\abs{\lvert}{\rvert}

\title{Sublinear Classical and Quantum Algorithms for General Matrix Games}
\author{
  Tongyang Li\thanks{Equal contribution.}\textsuperscript{\rm 1,2}\quad Chunhao Wang$^{*}$\textsuperscript{\rm 3,4}\quad Shouvanik Chakrabarti\textsuperscript{\rm 1}\quad Xiaodi Wu\textsuperscript{\rm 1}\\
}
\affiliations{
    \textsuperscript{\rm 1}Joint Center for Quantum Information and Computer Science, Department of Computer Science, and Institute for Advanced Computer Studies, University of Maryland \\
    \textsuperscript{\rm 2}Center for Theoretical Physics, Massachusetts Institute of Technology \\
    \textsuperscript{\rm 3}Department of Computer Science and Engineering, Pennsylvania State University \\
    \textsuperscript{\rm 4}Department of Computer Science, University of Texas at Austin\\
    tongyang@mit.edu, cwang@psu.edu, \{shouv,xwu\}@cs.umd.edu
}

\begin{document}

\maketitle

\begin{abstract}
We investigate sublinear classical and quantum algorithms for matrix games, a fundamental problem in optimization and machine learning, with provable guarantees. Given a matrix $A\in\R^{n\times d}$, sublinear algorithms for the matrix game $\min_{x\in\mathcal{X}}\max_{y\in\mathcal{Y}} y\trans Ax$ were previously known only for two special cases: (1) $\mathcal{Y}$ being the $\ell_{1}$-norm unit ball, and (2) $\mathcal{X}$ being either the $\ell_{1}$- or the $\ell_{2}$-norm unit ball. We give a sublinear classical algorithm that can interpolate smoothly between these two cases: for any fixed $q\in (1,2]$, we solve the matrix game where $\mathcal{X}$ is a $\ell_{q}$-norm unit ball within additive error $\epsilon$ in time $\tilde{O}((n+d)/{\epsilon^{2}})$. We also provide a corresponding sublinear quantum algorithm that solves the same task in time $\tilde{O}((\sqrt{n}+\sqrt{d})\poly(1/\epsilon))$ with a quadratic improvement in both $n$ and $d$. Both our classical and quantum algorithms are optimal in the dimension parameters $n$ and $d$ up to poly-logarithmic factors. Finally, we propose sublinear classical and quantum algorithms for the approximate Carath{\'e}odory problem and the $\ell_{q}$-margin support vector machines as applications.
\end{abstract}


\section{Introduction}\label{sec:intro}

\paragraph{Motivations.}
Minimax games between two parties, i.e., $\min_{x}\max_{y} f(x,y)$, is a basic model in game theory and has ubiquitous connections and applications to economics, optimization and machine learning, theoretical computer science, etc. Among minimax games, one of the most fundamental cases is the bilinear minimax game, also known as the \emph{matrix game}, with the following form:
\begin{align}\label{eqn:matrix-game-defn}
\hspace{-1mm}\min_{x\in\mathcal{X}}\max_{y\in\mathcal{Y}} y\trans Ax,\text{ where }A\in\R^{n\times d}, \mathcal{X}\subset \R^{d}, \mathcal{Y}\subset \R^{n}.
\end{align}
Matrix games are fundamental in algorithm design due to their equivalence to linear programs~\cite{dantzig1998linear}, and also in machine learning because they contain classification~\cite{novikoff1963convergence,minsky1988perceptrons} as a special case, and many other important problems.

For many common domains $\mathcal{X}$ and $\mathcal{Y}$, matrix games can be solved efficiently within approximation error $\epsilon$, i.e., to output $x'\in\mathcal{X}$ and $y'\in\mathcal{Y}$ such that $(y')\trans Ax'$ is $\epsilon$-close to the optimum in \eqn{matrix-game-defn}. For some specific choices of $\mathcal{X}$ and $\mathcal{Y}$, the matrix game can even be solved in \emph{sublinear time} in the size $nd$ of $A$. When $\mathcal{X}$ and $\mathcal{Y}$ are both $\ell_{1}$-norm unit balls,~\citet{grigoriadis1995sublinear} can solve the matrix game in time $O((n+d)\log(n+d)/\epsilon^{2})$. When $\mathcal{X}$ is the $\ell_{2}$-norm unit ball in $\R^{d}$ and $\mathcal{Y}$ is the $\ell_{1}$-norm unit ball in $\R^{n}$,~\citet{clarkson2012sublinear} can solve the matrix game in time $O((n+d)\log n/\epsilon^{2})$.

As far as we know, the $\ell_{1}$-$\ell_{1}$ and $\ell_{2}$-$\ell_{1}$ matrix games are the only two cases where sublinear algorithms are known. However, there is general interest of solving matrix games with general norms. For instance, matrix games are closely related to the Carath{\'e}odory problem for finding a sparse linear combination in the convex hull of given data points, where all the $\ell_{p}$-metrics with $p\geq 2$ have been well-studied~\cite{barman2015approximating,mirrokni2017tight,combettes2019revisiting}. In addition, matrix games are common in machine learning especially support vector machines (SVMs), and general $\ell_{p}$-margin SVMs have also been considered by previous literature, see e.g.~the book by~\citet{deng2012support}. In all, it is a natural question to investigate \emph{sublinear algorithms for general matrix games}. In addition, quantum computing has been rapidly advancing and current technology has reached "quantum supremacy" for some specific tasks~\cite{arute2019supremacy}; since previous works have given sublinear quantum algorithms for $\ell_{1}$-$\ell_{1}$ matrix games~\cite{li2019classification,vanApeldoorn2019games} and $\ell_{2}$-$\ell_{1}$ matrix games~\cite{li2019classification} with running time $(\sqrt{n}+\sqrt{d})\poly(1/\epsilon)$, it is also natural to explore \emph{sublinear quantum algorithms for general matrix games}.

\paragraph{Contributions.} We conduct a systematic study of $\ell_{q}$-$\ell_{1}$ matrix games for any $q\in (1,2]$ which corresponds to $\ell_{q}$-margin SVMs and the $\ell_{p}$-Carath{\'e}odory problem for any $p\geq 2$. We use the following entry-wise input model, the standard assumption in the sublinear algorithms in~\citet{grigoriadis1995sublinear,clarkson2012sublinear}:

\textbf{Input model:} Given any $i \in [n]$ and $j\in [d]$, the $j^{\text{th}}$ entry of $\A_i$ can be recovered in $O(1)$ time.

Quantumly, we consider an almost same oracle:

\textbf{Quantum input model:} Given any $i \in [n]$ and $j\in [d]$, the $j^{\text{th}}$ entry of $\A_i$ can be recovered in $O(1)$ time \emph{coherently}.

The only difference is to allow \emph{coherent} queries, which give quantum algorithms the ability to query different locations in superposition, and have been the \emph{standard} quantization of the classical inputs and commonly adopted in previous works~\cite{li2019classification,vanApeldoorn2019games}.

\begin{theorem}[Main Theorem]\label{thm:main-intro}
  Given $q\in (1,2]$. Define $p\geq 2$ such that $\frac{1}{p}+\frac{1}{q}=1$. Consider the $\ell_{q}$-$\ell_{1}$ matrix game\footnote{Throughout the paper, we use the bold font $\vect{p}$ to denote a vector and the math font $p$ to denote a real number.}:
\begin{align}\label{eqn:matrix-game-defn-intro}
\sigma:=\max_{x\in\B_{q}^{d}}\min_{\vect{p}\in\Delta_{n}} \vect{p}\trans Ax,
\end{align}
where $\B_{q}^{d}$ is the $\ell_{q}$-unit ball in $\R^{d}$ and $\Delta_{n}$ is the $\ell_{1}$-simplex in $\R^{n}$. Then we can find an $\bar{x}\in\B_{q}^{d}$ s.t.\footnote{$\bar{x}\in\B_{q}^{d}$ is the standard objective quantity under the $\ell_{q}$-norm. Also note that once we have the $\bar{x}$ in \eqn{matrix-game-goal-intro}, any $\vect{p}\in\Delta_{n}$ satisfies $\vect{p}\trans A\bar{x}\geq\sigma-\epsilon$.}
\begin{align}\label{eqn:matrix-game-goal-intro}
\min_{i\in\range{n}}A_{i}\bar{x}\geq\sigma-\epsilon
\end{align}
with success probability at least $2/3$, using
\begin{itemize}
\item $O\big(\frac{(n+d)(p+\log n)}{\epsilon^{2}}\big)$ classical queries (\thm{main}); or
\item$\tilde{O}\big(\frac{p^2\sqrt{n}}{\epsilon^4} + \frac{p^{3.5}\sqrt{d}}{\epsilon^7}\big)$ quantum queries\footnote{Here $\tilde{O}$ omits poly-logarithmic factors.} (\thm{main-quantum}).
\end{itemize}
When $p=\Omega(\log d/\epsilon)$, the above bounds can be improved (by \lem{interpolation}) to respectively
\begin{itemize}
\item $O\big(\frac{(n+d)(\frac{\log d}{\epsilon}+\log n)}{\epsilon^{2}}\big)$ queries to the classical input model;
\item $\tilde{O}\big(\frac{\sqrt{n}}{\epsilon^{6}} + \frac{\sqrt{d}}{\epsilon^{10.5}}\big)$ queries to the quantum input model.
\end{itemize}
Both results are optimal in $n$ and $d$ up to poly-log factors as we show $\Omega(n+d)$ and $\Omega(\sqrt{n}+\sqrt{d})$ classical and quantum lower bounds respectively when $\epsilon=\Theta(1)$ (\thm{matrix-lower-main}).
\end{theorem}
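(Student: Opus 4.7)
The plan is to recast the saddle point in \eqn{matrix-game-defn-intro} as a pair of no-regret online learning problems and run them against each other. On the dual side ($\vect{p}\in\Delta_n$) I use multiplicative weights with entropic regularization, and on the primal side ($x\in\B_q^d$) mirror descent with a suitably-scaled $\ell_q$-squared regularizer, which is $1$-strongly convex w.r.t.\ $\|\cdot\|_q$ for $q\in(1,2]$ and has diameter of order $p$ on $\B_q^d$. Standard regret analyses give per-round regret bounds of $O(\sqrt{\log n/T})$ and $O(\sqrt{p/T})$ respectively, so by the duality-gap argument the averaged iterates form an $\epsilon$-saddle point after $T=O((p+\log n)/\epsilon^2)$ rounds; this is where the $(p+\log n)$ factor in both bounds originates.

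For the classical sublinear algorithm each round must avoid reading the whole matrix. The dual gradient $Ax_t$ is estimated by sampling a coordinate $j$ from an importance distribution induced by $x_t$ and returning a single column of $A$, generalizing the samplers of Grigoriadis--Khachiyan and Clarkson--Hazan--Woodruff to the $\ell_q$ setting. The primal gradient $A^{\top}\vect{p}_t$ is estimated by sampling $i\sim\vect{p}_t$ and returning the single row $A_{i,\cdot}$; since $\|A_{i,\cdot}\|_\infty\leq 1$ we have $\|A_{i,\cdot}\|_p\leq d^{1/p}$, which keeps the second moment in the dual norm controlled. A lazy representation of $\vect{p}_t$ via its exponential weights lets each iteration cost $O(n+d)$ queries, yielding \thm{main}.

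For the quantum algorithm the mirror-descent dynamics and regret analysis are unchanged, and the speedup comes from implementing the per-iteration sampling steps with quantum subroutines. Drawing $i\sim \vect{p}_t$ is done by quantum Gibbs / rejection sampling in time $\tilde{O}(\sqrt{n})$, and the $\ell_\infty$-accurate estimate of $Ax_t$ used to drive the multiplicative-weights update is produced by coordinate-wise amplitude estimation in time $\tilde{O}(\sqrt{d}\cdot\poly(1/\epsilon))$. Multiplying per-iteration cost by $T=O((p+\log n)/\epsilon^2)$ gives the claimed $\tilde{O}(p^2\sqrt{n}/\epsilon^4+p^{3.5}\sqrt{d}/\epsilon^7)$ bound. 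The improved bounds in the large-$p$ regime follow from \lem{interpolation}: once $p\geq\Omega(\log d/\epsilon)$ the $\ell_p$ norm approximates $\ell_\infty$ within a $(1+\epsilon)$ factor, so we may truncate $p$ at $\Theta(\log d/\epsilon)$.

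The main obstacle is controlling the variance of the stochastic primal gradient in the $\|\cdot\|_p$ norm when $p$ is large: a direct second-moment bound loses polynomial factors in $d$, and the right argument combines the entrywise $\ell_\infty$ bound on $A$ with the $d^{1/p}$ bound on $\|A_{i,\cdot}\|_p$. This calculation propagates $\poly(p)$ factors into the iteration count and, once further amplified by amplitude estimation, produces the degraded $\poly(1/\epsilon)$ dependence in the quantum bound. The matching $\Omega(n+d)$ and $\Omega(\sqrt{n}+\sqrt{d})$ lower bounds for $\epsilon=\Theta(1)$ are obtained by reducing a suitable hard instance of $\mathrm{OR}$ / search to the matrix game, invoking folklore classical lower bounds and the tight quantum search lower bound respectively.
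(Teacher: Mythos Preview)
Your high-level primal--dual framework (multiplicative weights on $\Delta_n$ against mirror descent with the $\tfrac12\|\cdot\|_q^2$ regularizer on $\B_q^d$, combined via the standard regret-to-duality-gap argument) matches the paper exactly, and your iteration count $T=O((p+\log n)/\epsilon^2)$ is correct. However, you have misidentified the key technical obstacle. The paper assumes $\|A_i\|_p\leq 1$ (not $\|A_i\|_\infty\leq 1$), so there is no $d^{1/p}$ factor anywhere, and the primal ``gradient'' $A_{i_t}$ is an exact row with $\|A_{i_t}\|_p\leq 1$---no variance to control there. The real issue is the \emph{dual} estimator: one must estimate $A_i x_t$ for all $i$ from a single coordinate sample. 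The paper samples $j_t$ with probability $|x_t(j)|^q/\|x_t\|_q^q$ and uses $\tilde v_t(i)=A_i(j_t)\|x_t\|_q^q/x_t(j_t)^{q-1}$; the crucial calculation is that $\E[|\tilde v_t(i)|^p]=\|A_i\|_p^p\|x_t\|_q^p\leq 1$, and then a H\"older argument (\lem{p-to-2}) converts this $p$-th moment bound into the second-moment bound $\E[\tilde v_t(i)^2]\leq 1$ required by the variance-MWU lemma. Your proposal does not specify the sampling distribution or this moment-reduction step, and the obstacle you describe (primal variance, $d^{1/p}$ blowup) is a red herring under the paper's normalization.

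Your quantum description also has a genuine gap. You propose to drive the MWU update with an $\ell_\infty$-accurate estimate of $Ax_t$ obtained by ``coordinate-wise amplitude estimation in time $\tilde O(\sqrt{d}\,\poly(1/\epsilon))$,'' but $Ax_t$ has $n$ coordinates, so doing this for all of them costs $\tilde O(n\sqrt{d}/\epsilon)$ per round---not sublinear in $n$. The paper does \emph{not} estimate $Ax_t$ accurately; it keeps the same single-sample stochastic estimator $\tilde v_t$ as in the classical algorithm, and the quantum speedup on the $\sqrt{d}$ side comes entirely from \emph{quantum $\ell_q$-sampling}: preparing the state $\propto\sum_j y_t(j)^{q/2}|j\rangle$ via amplitude amplification (\algo{lq-state-prep}) in $\tilde O(\sqrt{d})$ queries and measuring it to obtain $j_t$. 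The $\sqrt{n}$ side comes from preparing $|p_t\rangle$ by the same mechanism (with $q=2$), where each oracle call to $u_{t+1}(i)$ costs $O(t)$ queries to $A$---this is why the final bound has $T^2$ rather than $T$ multiplying $\sqrt{n}$. Your Gibbs-sampling idea for $i\sim p_t$ is in the right spirit, but the $\sqrt{d}$ component needs to be replaced by $\ell_q$-state preparation, not amplitude estimation of $Ax_t$.
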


Conceptually, our classical and quantum algorithms for general matrix games enjoy quite a few nice properties. On the one hand, they can be directly applied to
\begin{itemize}
\item \textbf{Convex geometry:} We give the \emph{first} sublinear classical and quantum algorithms for the approximate Carath\'{e}odory problem (\cor{Caratheodory}), improving the previous linear-time algorithms of~\citet{mirrokni2017tight,combettes2019revisiting};
\item \textbf{Supervised learning:} We provide the \emph{first} sublinear algorithms for general $\ell_{q}$-margin support vector machines (SVMs) (\cor{SVM}).
\end{itemize}
On the other hand, our quantum algorithm is \textbf{friendly for near-term applications}. It uses the \emph{standard quantum input model} and needs not to use any sophisticated quantum data structures. It is \emph{classical-quantum hybrid} where the quantum part is isolated by pieces of state preparations connected by classical processing. Its output is completely \emph{classical}.

Technique-wise, we are deeply inspired by~\citet{clarkson2012sublinear}, which serves as the starting point of our algorithm design. At a high level, Clarkson et al.'s algorithm follows a primal-dual framework where the primal part applies ($\ell_{2}$-norm) online gradient descent (OGD) by~\citet{zinkevich2003online}, and the dual part applies multiplicative weight updates (MWU) by $\ell_{2}$-sampling. The choice of the $\ell_{2}$-norm metric greatly facilitates the design and analysis of the algorithms for both parts. However, it is conceivable that \emph{more sophisticated design and analysis} will be required to handle general $\ell_{q}$-$\ell_{1}$ matrix games.

Classically, our main technical contribution is to expand the primal-dual approach of~\citet{clarkson2012sublinear} to work for more general metrics for the $\ell_{q}$-$\ell_{1}$ matrix game. Specifically, in the primal we replace OGD by a generalized $p$-norm OGD due to~\citet{shalev2012online}, and in the dual we replace the $\ell_{2}$-sampling by $\ell_{q}$-sampling. We conduct a careful algorithm design and analysis to ensure that this strategy only incurs an $O(p/\epsilon^{2})$ overhead in the number of iterations, and the error of the $\ell_{q}$-$\ell_{1}$ matrix game  is still bounded by $\epsilon$ as in \eqn{matrix-game-goal-intro}. In a nutshell, our algorithm can be viewed as an \emph{interpolation} between the $\ell_{2}$-$\ell_{1}$ matrix game~\cite{clarkson2012sublinear} and the $\ell_{1}$-$\ell_{1}$ matrix game~\cite{grigoriadis1995sublinear}: when $q$ is close to 2 the algorithm is more similar to~\citet{clarkson2012sublinear}, whereas when $q$ is close to 1, $p$ is large and the $p$-norm GD becomes closer to the normalized exponentiated gradient~\cite{shalev2012online}, which is exactly the update rule in~\citet{grigoriadis1995sublinear}.

Quantumly, our \textbf{main contribution} is the systematic improvement of the previous quantum algorithm for $\ell_{2}$-$\ell_{1}$ matrix games by~\citet{li2019classification}. They achieved a quantum speedup of $\tilde{O}(\sqrt{n}+\sqrt{d})$ for solving $\ell_{2}$-$\ell_{1}$ matrix games by levering \emph{quantum amplitude amplification} and observing that $\ell_2$-sampling can be readily accomplished by \emph{quantum state preparation} as quantum states refer to $\ell_2$ unit vectors. For general $\ell_{q}$-$\ell_{1}$ matrix game ($q\in (1,2]$), we likewise upgrade both primal and dual parts as in our classical algorithm: specifically, in the primal, we apply the $p$-norm OGD in $\tilde{O}(\sqrt{d})$ time, whereas in the dual, we apply the multiplicative weight update via an $\ell_{q}$-sampling in $\tilde{O}(\sqrt{n})$ time. To that end, we contribute to the following technical improvements, which may be of independent interest:
\begin{itemize}
\item In our algorithm, we cannot directly leverage quantum state preparation in the $\ell_{q}$ metric because it corresponds to $\ell_2$-normalized vectors. Instead, we propose \algo{lq-state-prep} for \textbf{quantum $\ell_{q}$-sampling} with $O(\sqrt{n})$ oracle calls which works with states whose amplitudes follow $\ell_q$-norm proportion. Measuring such states is equivalent to performing $\ell_q$-sampling.

\item When $p=q=2$, we improved the \textbf{$\epsilon$-dependence} from the $1/\epsilon^8$ in the prior art by~\citet{li2019classification} to $1/\epsilon^7$. This is achieved by deriving a better upper bound on the entries of the vectors in the $p$-norm OGD (i.e., $y_{t,j}$ as in Eq.~\eq{ytj}); see the supplementary material (Eqs.~\eq{ytj}-\eq{d-1}) for details.

\item In our \textbf{lower bounds}, although the hard cases are motivated by~\citet{li2019classification}, the matrix game values are much more complicated in the $\ell_q$-$\ell_1$ case. In the supplementary material, we figure out two functions $f_{1}$ and $f_{2}$ that not only separate the game values of two specifically-constructed $\ell_q$-$\ell_1$ matrix games but also have monotone and nonnegative properties, which are crucial factors in our proof.
\end{itemize}
These improvements together result in \thm{main-intro}.

\paragraph{Related work.} Matrix games were probably first studied as zero-sum games by~\citet{neumann1928theorie}. The seminal work~\cite{nemirovsky1983problem} proposed the mirror descent method and gave an algorithm for solving matrix games in time $\tilde{O}(nd/\epsilon^{2})$. This was later improved to $\tilde{O}(nd/\epsilon)$ by the prox-method due to~\citet{nemirovski2004prox} and the dual extrapolation method due to~\citet{nesterov2007dual}. To further improve the cost, there have been two main focuses:
\begin{itemize}
\item Sampling-based methods: They focus on achieving sublinear cost in $nd$, the size of the matrix $A$.~\citet{grigoriadis1995sublinear,clarkson2012sublinear} mentioned above are seminal examples; these sublinear algorithms can also be used to solve semidefinite programs~\cite{garber2011approximating}, SVMs~\cite{hazan2011svm}, etc.

\item Variance-reduced methods: They focus on the cost in $1/\epsilon$, in particular its decoupling with $nd$.~\citet{palaniappan2016stochastic} showed how to apply the standard SVRG~\cite{johnson2013accelerating} technique for solving $\ell_{2}$-$\ell_{2}$ matrix games; this idea can also be extended to smooth functions using general Bregman divergences~\cite{shi2017bregman}. Variance-reduced methods for solving matrix games culminate in~\citet{carmon2019variance}, where they show how to solve $\ell_{1}$-$\ell_{1}$ and $\ell_{2}$-$\ell_{1}$ matrix games in time $\tilde{O}(\nnz(A)+\sqrt{\nnz(A)\cdot(n+d)}/\epsilon)$, where $\nnz(A)$ is the number of nonzero elements in $A$.
\end{itemize}
There have been relatively few quantum results for solving matrix games.~\citet{kapoor2016quantum} solved the $\ell_{2}$-$\ell_{1}$ matrix game with cost $\tilde{O}(\sqrt{n}d/\epsilon^{2})$ using an unusual input model where the representation of a data point in $\R^{d}$ is the
concatenation of $d$ floating point numbers. More recently,~\citet{vanApeldoorn2019games} was able to solve the $\ell_{1}$-$\ell_{1}$ matrix game with cost $\tilde{O}(\sqrt{n}/\epsilon^{3}+\sqrt{d}/\epsilon^{3})$ using the standard input model above, and~\citet{li2019classification} solved the $\ell_{2}$-$\ell_{1}$ matrix game with cost  $\tilde{O}(\sqrt{n}/\epsilon^{4}+\sqrt{d}/\epsilon^{8})$ also using the standard input model.


\section{Preliminaries and Notations}\label{sec:prelim}
To facilitate the reading of this paper, we introduce necessary definitions and notations here.

\paragraph{Preliminaries for quantum computing.}
Quantum mechanics can be formulated in terms of linear algebra. For the space $\C^{d}$, we denote $\{\vec{e}_{0},\ldots,\vec{e}_{d-1}\}$ as its computational basis, where $\vec{e}_{i}=(0,\ldots,1,\ldots,0)\trans$ where 1 only appears in the $(i+1)^{\text{th}}$ coordinate. These basic vectors can be written by the \emph{Dirac notation}: $\vec{e}_{i}:=|i\>$ (called a ``ket"), and $\vec{e}_{i}\trans:=\<i|$ (called a ``bra"). A $d$-dimensional \emph{quantum state} is a unit vector in $\C^{d}$: i.e., $|v\>=(v_{0},\ldots,v_{d-1})\trans$ such that $\sum_{i=0}^{d-1}|v_{i}|^{2}=1$.

\emph{Tensor product} of quantum states is their Kronecker product: if $|u\>\in\C^{d_{1}}$ and $|v\>\in\C^{d_{2}}$, then
\begin{align}
|u\>\otimes|v\>:=(u_{0}v_{0},u_{0}v_{1},\ldots,u_{d_{1}-1}v_{d_{2}-1})\trans,
\end{align}
which is a vector in $\C^{d_{1}}\otimes\C^{d_{2}}$.

Quantum access to an input matrix, also known as a \emph{quantum oracle}, is reversible and allows access to coordinates of the matrix in \emph{superposition}, this is the essence of quantum speedups. In particular, to access entries of a matrix $\A\in\R^{n\times d}$, we exploit a quantum oracle $O_{\A}$, which is a unitary transformation on $\C^{n}\otimes \C^{d}\otimes\C^{d_{\text{acc}}}$ ($d_{\text{acc}}$ being the dimension of a floating-point register) such that
\begin{align}\label{eqn:oracle-defn}
O_{\A}(|i\>\otimes|j\>\otimes |z\>)=|i\>\otimes|j\>\otimes|z\oplus \A_{ij}\>
\end{align}
for any $i\in\range{n}$, $j\in\range{d}$, and $z\in\C^{d_{\text{acc}}}$. Intuitively, $O_{\A}$ reads the entry $\A_{ij}$ and stores it in the third register as a floating-point number. However, to promise that $O_{\A}$ a unitary transformation, $O_{\A}$ applies the XOR operation ($\oplus$) on the third register. This is a natural generalization of classical reversible computation, when each entry of $\A$ can be recovered in $O(1)$ time. Subsequently, a common assumption is that a single query to $O_{\A}$ takes $O(1)$ cost.

\paragraph{Interpolation for large $p$.}
If $p$ is large, we prove the following lemma showing that we can restrict without loss of generality to cases where $p$ such that $\frac{1}{p} + \frac{1}{q}=1$ is $O(\log d/\epsilon)$, since in this case the $\ell_{q}$-$\ell_{1}$ matrix game is $\epsilon$-close to the $\ell_{1}$-$\ell_{1}$ matrix game in the following sense:
\begin{lemma}
  \label{lem:interpolation}
  An $\ell_{q}$-$\ell_{1}$ matrix game where $p$ such that $\frac{1}{p} + \frac{1}{q} = 1$ is greater than $\log d/\epsilon$ can be solved using an algorithm for solving $\ell_{1}$-$\ell_{1}$ games. This introduces an error $O(\epsilon)$ in the objective value.
 \end{lemma}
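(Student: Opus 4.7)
The plan is to exploit the standard norm inequality
\[
\|x\|_{q} \leq \|x\|_{1} \leq d^{1/p}\|x\|_{q}
\qquad(\text{from Hölder, since }1/p+1/q=1),
\]
which sandwiches the unit balls as $\B_{1}^{d} \subseteq \B_{q}^{d} \subseteq d^{1/p}\,\B_{1}^{d}$. When $p \geq (\log d)/\epsilon$, we have $d^{1/p} = e^{(\log d)/p} \leq e^{\epsilon} \leq 1+2\epsilon$ for small $\epsilon$, so the $\ell_{q}$-ball is pinched between $\B_{1}^{d}$ and $(1+2\epsilon)\B_{1}^{d}$. This is the entire engine of the lemma: the feasible sets of the two matrix games are almost the same.

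Let $\sigma_{q}$ and $\sigma_{1}$ denote the values of the $\ell_{q}$-$\ell_{1}$ and $\ell_{1}$-$\ell_{1}$ games respectively. Both are nonnegative (take $x=0$ on the maximization side), and by monotonicity of the maximum over the feasible set, the two inclusions above give
\[
\sigma_{1} \;\leq\; \sigma_{q} \;\leq\; (1+2\epsilon)\sigma_{1}.
\]
Under the standard assumption $\|A\|_{\max}\leq 1$ used throughout the input model, Hölder further gives $|\sigma_{q}| \leq \max_{i}\|A_{i}\|_{p}\cdot\max_{x\in\B_{q}^{d}}\|x\|_{q} \leq d^{1/p} \leq 1+2\epsilon$, so $|\sigma_{q}-\sigma_{1}| \leq 2\epsilon\cdot\sigma_{q} = O(\epsilon)$.

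The algorithmic step is then essentially a black box. I would invoke the sublinear $\ell_{1}$-$\ell_{1}$ algorithm of \citet{grigoriadis1995sublinear} to obtain $\bar{x}\in\B_{1}^{d}$ with $\min_{i\in[n]} A_{i}\bar{x} \geq \sigma_{1} - \epsilon$. Because $\B_{1}^{d}\subseteq \B_{q}^{d}$, this $\bar{x}$ is automatically feasible for the $\ell_{q}$-$\ell_{1}$ game, and chaining the two inequalities yields
\[
\min_{i\in[n]} A_{i}\bar{x} \;\geq\; \sigma_{1} - \epsilon \;\geq\; \sigma_{q} - (\sigma_{q}-\sigma_{1}) - \epsilon \;\geq\; \sigma_{q} - O(\epsilon),
\]
which is exactly the output guarantee \eqn{matrix-game-goal-intro} with error $O(\epsilon)$.

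There is no hard step here — the argument is purely a feasibility-set comparison — and the only bookkeeping issue is absorbing the absolute constants in the $e^{\epsilon}\leq 1+2\epsilon$ estimate and verifying that the boundedness assumption $\|A\|_{\max}\leq 1$ (already implicit in the paper's entry-wise oracle) suffices to make $\sigma_{q}$ an $O(1)$ quantity, so that the multiplicative $(1+2\epsilon)$ factor translates into an additive $O(\epsilon)$ error. Rescaling the error parameter fed to the $\ell_{1}$-$\ell_{1}$ solver by a constant then produces an $\epsilon$ final error if one prefers the sharp constant.
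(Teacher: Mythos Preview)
Your proof is correct and follows essentially the same idea as the paper: both exploit the norm comparison $\|x\|_{1}\le d^{1/p}\|x\|_{q}$ with $d^{1/p}\le 1+O(\epsilon)$ when $p\ge(\log d)/\epsilon$, and then transfer the $\ell_{1}$-$\ell_{1}$ solver's guarantee back to the $\ell_{q}$ game. The only cosmetic difference is that the paper states the ball comparison additively ($\B_{q}^{d}\subset\B_{1}^{d}+\epsilon\,\B_{q}^{d}$, via the rescaling $y=d^{-1/p}x$) and then invokes Lipschitz continuity of $x\mapsto p^{\top}Ax$, whereas you state it multiplicatively ($\B_{q}^{d}\subset(1+2\epsilon)\B_{1}^{d}$) and use homogeneity plus $\sigma_{1}\ge 0$; one minor point is that the paper's standing normalization is $\|A_{i}\|_{p}\le 1$ rather than $\|A\|_{\max}\le 1$, which gives $\sigma_{q}\le 1$ directly by H\"older without the extra $d^{1/p}$ factor.
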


\begin{proof}
Assume without loss of generality that $\epsilon \le 1/2$. Let $p \ge \log d/\epsilon \ge \log d/(-\log (1 - \epsilon))$. It can be easily verified that $\B_{1}^{d} \subset \B_{q}^{d} \subset \B_{1}^{d}+\left(1 - d^{-1/p}\right)\B_{q}^{d}$. Thus $\B_{q}^{d} \subset \B_{1}^{d}+\epsilon \B_{q}^{d}$.

Consider applying an algorithm to solve an $\ell_{1}$-$\ell_{1}$ matrix game instead of the $\ell_{q}$-$\ell_{1}$ matrix game as required in \eqn{matrix-game-defn-intro}.  Let the optimal solution to \eqn{matrix-game-defn-intro} be $x^{*} \in \B_{q}^{d}, p^{*} \in \Delta_{n}$. By the previous analysis, there is a point $x \in \B_{1}^{d}$ such that $\lVert x - x^{*}\rVert_{q}\le \epsilon$. Thus the solution $x,p^{*}$ has an error at most $O(\epsilon)$ from the true objective, and the algorithm for solving $\ell_{1}$-$\ell_{1}$ games finds a solution at least as good as this.
\end{proof}

\paragraph{Notations.}
Throughout the paper, we denote $p,q>1$ to be two real numbers such that $\frac{1}{p}+\frac{1}{q}=1$;  $p\in[2,+\infty)$ and $q\in(1,2]$. For any $s>1$, we use $\B_{s}^{d}$ to denote the $d$-dimensional unit ball in $\ell_{s}$-norm, i.e., $\B_{s}^{d}:=\{x:\sum_{i\in\range{d}}|x_{i}|^{s}\leq 1\}$; we use $\Delta_{n}$ to denote the $n$-dimensional unit simplex $\{p\in\R^{n}: p_{i}\geq 0, \sum_{i}p_{i}=1\}$, and use $\vecc{1}_{n}$ to denote the $n$-dimensional all-one vector. We denote $A\in\R^{n\times d}$ to be the matrix whose $i^{\text{th}}$ row is $A_{i}\trans$ for all $i\in\range{n}$. We define $\sgn\colon\R\to\{-1,0,1\}$ such that $\sgn(x)=-1$ if $x<0$, $\sgn(x)=1$ if $x>0$, and $\sgn(0)=0$.


\section{A Sublinear Classical Algorithm for General Matrix Games}
For any $q\in (1,2]$, we consider the $\ell_{q}$-$\ell_{1}$ matrix game:
\begin{align}\label{eqn:matrix-game-defn-c}
\sigma:=\max_{x\in\B_{q}^{d}}\min_{\vect{p}\in\Delta_{n}} \vect{p}\trans Ax.
\end{align}
The goal is to find a $\bar{x}$ that approximates the equilibrium of the matrix game within additive error $\epsilon$:
\begin{align}\label{eqn:matrix-game-goal}
\min_{i\in\range{n}}A_{i}\bar{x}\geq\sigma-\epsilon.
\end{align}
Throughout the paper, we assume $A_{1},\ldots,A_{n}\in\B_{p}^{d}$, i.e., all the $n$ data points are normalized to have $\ell_{p}$-norm at most 1.

\SetAlCapHSkip{0em}
\IncMargin{-1em}
\begin{algorithm}[htbp]
\KwInput{$\eps>0$; $p\in[2,+\infty), q\in(1,2]$ such that $\frac{1}{p}+\frac{1}{q}=1$; $A\in \R^{n \times d}$ with $A_{i}\in\B_{p}^{d}\ \forall i\in\range{n}$.}
\KwOutput{$\bar{x}$ that satisfies \eqn{matrix-game-goal}.}
Let $T=\lceil\frac{895\log n +4p}{\eps^{2}}\rceil$, $y_1=\vecc{0}_{d}$, $\eta=\sqrt{\frac{11\log n}{12T}}$, $w_{1}=\vecc{1}_{n}$\;
    \For{$t=1$ \KwTo $T$}{
        $p_{t}\leftarrow\frac{w_{t}}{\|w_{t}\|_{1}}$, $x_{t}\leftarrow\frac{y_{t}}{\max\{1,\|y_{t}\|_{q}\}}$\; \label{lin:update-x}
        Choose $i_{t}\in\range{n}$ by $i_{t}\leftarrow i$ with probability $p_{t}(i)$\; \label{lin:update-it}
        Define $y_{t+1}$ where for any $j\in\range{d}$, $y_{t+1,j}\leftarrow y_t + \sqrt{\frac{q-1}{2T}}\frac{\sgn(A_{i_t,j})|A_{i_t,j}|^{p-1}}{\|A_{i_t}\|_{p}^{p-2}}$\; \label{lin:update-y}
        Choose $j_t\in\range{d}$ by $j_t\leftarrow j$ with probability $\frac{x_t(j)^q}{\|x_t\|_{q}^q}$\; \label{lin:update-j}
        \For{$i=1$ \KwTo $n$}{
            $\tilde{v}_{t}(i)\leftarrow A_{i}(j_{t})\|x_{t}\|_{q}^{q}/x_{t}(j_{t})^{q-1}$ \label{lin:update-random-variable}\;
            $v_{t}(i)\leftarrow\clip(\tilde{v}_{t}(i),\frac{1}{\eta})$ where $\clip(v,M):=\min\{M,\max\{-M,v\}\}$ $\forall v,M\in\R$\; \label{lin:update-v}
            $w_{t+1}(i)\leftarrow w_{t}(i)(1-\eta v_{t}(i)+\eta^{2}v_{t}(i)^{2})$\; \label{lin:update-w}
        }
    }
Return $\bar{x}=\frac{1}{T}\sum_{t=1}^{T}x_t$. \label{lin:return-bar-x}
\caption{A sublinear algorithm for $\ell_{q}$-$\ell_{1}$ games.}
\label{algo:sublinear-matrix}
\end{algorithm}

\begin{theorem}\label{thm:main}
The output of \algo{sublinear-matrix} satisfies \eqn{matrix-game-goal} with probability at least $2/3$, and its total running time is $O(\frac{(n+d)(p+\log n)}{\epsilon^{2}})$ where $p\geq 2$ such that $\frac{1}{p}+\frac{1}{q}=1$.
\end{theorem}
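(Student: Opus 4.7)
The plan is the standard primal-dual regret scheme for a bilinear saddle point: establish a primal regret bound for the $p$-norm online gradient descent (OGD) governing $x_{t}$, a dual regret bound for the variance-aware multiplicative weights update (MWU) governing $\vect{p}_{t}$, sum them over the $T$ iterations, and apply martingale concentration to lift the resulting in-expectation inequality to a constant-probability guarantee on the duality gap of $\bar{x}=\frac{1}{T}\sum_{t}x_{t}$. The iteration count $T=\lceil(895\log n+4p)/\epsilon^{2}\rceil$ and the MWU step size $\eta=\sqrt{11\log n/(12T)}$ prescribed by \algo{sublinear-matrix} will emerge from balancing the $O(\sqrt{pT})$ primal regret against the $O(\log n/\eta+\eta T)$ dual regret.

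On the primal side, the combined effect of \lin{update-y} and \lin{update-x} is one step of the $p$-norm OGD of \citet{shalev2012online} applied to the stochastic linear loss $x\mapsto-A_{i_{t}}^{\top}x$: the quantity $\sgn(A_{i_{t},j})|A_{i_{t},j}|^{p-1}/\|A_{i_{t}}\|_{p}^{p-2}$ added in \lin{update-y} is a scaled version of the dual link $\nabla(\frac{1}{p}\|\cdot\|_{p}^{p})$ evaluated at $A_{i_{t}}$, and $y\mapsto y/\max\{1,\|y\|_{q}\}$ in \lin{update-x} is the projection onto $\B_{q}^{d}$. Because $\|A_{i}\|_{p}\le 1$ by the input normalization and $\frac{1}{2}\|\cdot\|_{q}^{2}$ is $(q-1)$-strongly convex with respect to $\|\cdot\|_{q}$, the OGD regret bound with step size $\sqrt{(q-1)/(2T)}$ gives, for every realization of the sampled rows,
\begin{equation*}
\sum_{t=1}^{T}A_{i_{t}}^{\top}(x^{*}-x_{t})\le O\!\left(\sqrt{T/(q-1)}\right)=O(\sqrt{pT})\quad\text{for every }x^{*}\in\B_{q}^{d}.
\end{equation*}
On the dual side, a direct computation on \lin{update-random-variable} shows that $\tilde{v}_{t}(i)$ is conditionally unbiased for $A_{i}x_{t}$, and Hölder's inequality combined with $\|A_{i}\|_{p}\le 1$ bounds its conditional second moment by $O(1)$. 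The clipping in \lin{update-v} therefore acts only in rare events and introduces a controllable bias, so the variance-aware MWU analysis for the update $w_{t+1}(i)=w_{t}(i)(1-\eta v_{t}(i)+\eta^{2}v_{t}(i)^{2})$ used by \citet{grigoriadis1995sublinear,clarkson2012sublinear} yields
\begin{equation*}
\sum_{t=1}^{T}\vect{p}_{t}^{\top}v_{t}\le\min_{i\in\range{n}}\sum_{t=1}^{T}v_{t}(i)+\frac{\log n}{\eta}+\eta\sum_{t=1}^{T}\vect{p}_{t}^{\top}v_{t}^{2}.
\end{equation*}

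Picking $x^{*}\in\B_{q}^{d}$ with $\min_{i}A_{i}x^{*}=\sigma$, summing the two regret bounds and using Azuma/Freedman inequalities to replace $A_{i_{t}}^{\top}x_{t}$ by $\vect{p}_{t}^{\top}Ax_{t}$ on the primal side and $v_{t}(i)$ by $A_{i}x_{t}$ on the dual side, we obtain $\sigma T-\min_{i}\sum_{t}A_{i}x_{t}\le O(\sqrt{pT}+\log n/\eta+\eta T)+(\text{martingale deviations})$. Substituting $T=\Theta((p+\log n)/\epsilon^{2})$ and the prescribed $\eta$ makes the right-hand side $O(\epsilon T)$, so dividing by $T$ and invoking linearity of $A_{i}(\cdot)$ yields \eqn{matrix-game-goal} with probability at least $2/3$ after a union bound over the $O(1)$ tail events. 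The main obstacle I anticipate is taming the heavy-tailed $\ell_{q}$-sampling estimator: when $q<2$, the factor $\|x_{t}\|_{q}^{q}/|x_{t}(j_{t})|^{q-1}$ in $\tilde{v}_{t}(i)$ is no longer automatically $O(1)$ as it is for the $\ell_{2}$-sampling of \citet{clarkson2012sublinear}, so the second-moment bound must go through Hölder against $\|A_{i}\|_{p}\le 1$, and the bias introduced by clipping at $1/\eta$ must be balanced against the $\eta^{2}v_{t}(i)^{2}$ correction in \lin{update-w}. Each outer iteration costs $O(n)$ for resampling $i_{t}$ and updating all $w_{t+1}(i)$ plus $O(d)$ for the $p$-norm OGD step, so the total running time is $O((n+d)T)=O((n+d)(p+\log n)/\epsilon^{2})$, as claimed.
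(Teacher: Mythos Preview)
Your proposal is correct and follows essentially the same primal--dual scheme as the paper: $p$-norm OGD regret (\prop{lp-OGD}) on the primal side, the variance-aware MWU bound plus martingale tails from \citet{clarkson2012sublinear} (\lem{2-3}) on the dual side, combined over $T$ iterations with the prescribed $\eta$. The only minor variation is that the paper bounds the second moment of $\tilde{v}_{t}(i)$ indirectly---first computing $\E[|\tilde{v}_{t}(i)|^{p}]=\|A_{i}\|_{p}^{p}\|x_{t}\|_{q}^{p}\le 1$ exactly and then invoking the power-mean inequality (\lem{p-to-2})---whereas you propose a direct H\"older argument on $\E[\tilde{v}_{t}(i)^{2}]$; both routes give the same $\le 1$ bound.
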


Our sublinear algorithm follows the primal-dual approach of Algorithm 1 of~\citet{clarkson2012sublinear}, which solves $\ell_1$-$\ell_2$ matrix games. Here for $\ell_q$-$\ell_1$ matrix games, the solution vector $x$ now lies in $\mathbb{B}_q^d$. Hence, the most natural adaptations are to use $\ell_q$-sampling instead of $\ell_2$-sampling in the primal updates, and to use a $p$-norm OGD by~\citet{shalev2012online} which generalizes the online gradient descent by~\citet{zinkevich2003online} in $\ell_{2}$-norm. In the following, we use various technical tools to show these natural adaptations actually work.

\begin{proposition}[{\citealt[Corollary 2.18]{shalev2012online}}]\label{prop:lp-OGD}
Consider a set of vectors $u_{1},\ldots,u_{T}\in\R^{d}$ such that $\|u_{i}\|_{p}\leq 1$. Set $\iota=\sqrt{\frac{q-1}{2T}}$. Let $x_{0}\leftarrow \vecc{0}_{d}$, $\tilde{x}_{t+1,i}\leftarrow x_{t,i}+\iota\frac{\sgn(u_{t,i})|u_{t,i}|^{p-1}}{\|u_{t}\|_{p}^{p-2}}$ for all $i\in\range{d}$, and $x_{t+1}\leftarrow\frac{\tilde{x}_{t+1}}{\max\{1,\|\tilde{x}_{t+1}\|_{q}\}}$. Then
\begin{align}
\max_{x\in\B_{q}^{d}}\sum_{t=1}^{T}u_{t}\trans x-\sum_{t=1}^{T}u_{t}\trans x_{t}\leq\sqrt{\frac{2T}{q-1}}.
\end{align}
\end{proposition}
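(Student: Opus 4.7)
The proposition is quoted as Corollary 2.18 of Shalev-Shwartz's online-learning survey, so the plan is to prove it as an instance of $p$-norm online mirror descent for linear losses. The essential ingredients are the strong convexity of the $\ell_{q}$-norm-squared potential, Fenchel duality to identify the update as a mirror step, and a telescoping argument.

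First I would set up the right potential: let $\Phi(x)=\tfrac{1}{2}\|x\|_{q}^{2}$. A classical fact of Ball--Carlen--Lieb (Lemma 2.14 of Shalev-Shwartz) says that for $q\in(1,2]$, $\Phi$ is $(q-1)$-strongly convex on $\R^{d}$ with respect to $\|\cdot\|_{q}$. Its Fenchel conjugate is $\Phi^{*}(y)=\tfrac{1}{2}\|y\|_{p}^{2}$ with gradient $\nabla\Phi^{*}(y)_{i}=\sgn(y_{i})|y_{i}|^{p-1}/\|y\|_{p}^{p-2}$, which is exactly the quantity the update adds to $x_{t}$. By duality of strong convexity, $\Phi^{*}$ is $\tfrac{1}{q-1}$-smooth with respect to $\|\cdot\|_{p}$, giving
\[
\Phi^{*}(y+h)\;\leq\;\Phi^{*}(y)+\nabla\Phi^{*}(y)^{\top}h+\tfrac{1}{2(q-1)}\|h\|_{p}^{2}.
\]
Next I would recognize the scheme $\tilde{x}_{t+1}=x_{t}+\iota\nabla\Phi^{*}(u_{t})$ followed by the radial projection $x_{t+1}=\tilde{x}_{t+1}/\max\{1,\|\tilde{x}_{t+1}\|_{q}\}$ as a (lazy) $p$-norm mirror-descent step onto $\B_{q}^{d}$, and note that the projection is nonexpansive in the relevant Bregman sense so it does not harm the regret.

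Given the setup, I would apply the standard online-mirror-descent regret inequality: for any comparator $x\in\B_{q}^{d}$,
\[
\sum_{t=1}^{T}u_{t}^{\top}(x-x_{t})\;\leq\;\frac{\Phi(x)-\Phi(x_{1})}{\iota}+\frac{\iota}{2(q-1)}\sum_{t=1}^{T}\|u_{t}\|_{p}^{2}.
\]
Since $x_{1}=\vecc{0}_{d}$ gives $\Phi(x_{1})=0$, $\Phi(x)\leq 1/2$ on $\B_{q}^{d}$, and $\|u_{t}\|_{p}\leq 1$ by hypothesis, the right-hand side is at most $\frac{1}{2\iota}+\frac{\iota T}{2(q-1)}$. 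Plugging in the specified $\iota=\sqrt{(q-1)/(2T)}$ balances the two terms and delivers exactly the claimed bound $\sqrt{2T/(q-1)}$.

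The main obstacle in this plan is the Ball--Carlen--Lieb strong-convexity estimate with the sharp constant $(q-1)$ for $\tfrac{1}{2}\|x\|_{q}^{2}$; its proof is delicate but is the standard black box in this literature. A secondary care is the bookkeeping to ensure that the specific additive-plus-radial-projection update of the proposition matches the mirror/FTRL telescope exactly rather than giving a slightly looser constant. In practice, since the result is cited verbatim from Shalev-Shwartz, one simply invokes Corollary 2.18 and moves on.
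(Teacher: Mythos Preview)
The paper does not prove this proposition at all; it is stated with attribution to \citet[Corollary~2.18]{shalev2012online} and used as a black box in the proof of \thm{main}. Your closing remark---``one simply invokes Corollary~2.18 and moves on''---is precisely what the paper does, so in that sense your proposal matches the paper exactly.

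Regarding the sketch you offer of the underlying argument: the framework (strong convexity of $\tfrac{1}{2}\|\cdot\|_{q}^{2}$ with constant $q-1$, the dual smoothness of $\tfrac{1}{2}\|\cdot\|_{p}^{2}$, and an OMD/FTRL telescoping regret bound) is the correct one and is indeed how Shalev-Shwartz derives such results. One caveat worth flagging more sharply than you do: the specific update $\tilde{x}_{t+1}=x_{t}+\iota\,\nabla\Phi^{*}(u_{t})$ accumulates the link-transformed gradients in the \emph{primal} space, whereas standard lazy mirror descent accumulates the raw $u_{t}$ in the \emph{dual} and only then applies $\nabla\Phi^{*}$. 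For $p\neq 2$ these are genuinely different iterates, so identifying the scheme ``as a (lazy) $p$-norm mirror-descent step'' is not literally correct, and the telescoping does require the particular treatment in Shalev-Shwartz rather than the generic OMD inequality you wrote. You already flag this as ``secondary bookkeeping,'' but it is a real gap in the sketch rather than just bookkeeping. Since neither the paper nor your bottom line depends on filling it, the point is moot for the purposes of this paper.
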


The analysis of \algo{sublinear-matrix} uses the following lemma, adapted from the variance multiplicative weight lemma and martingale tail bounds in \citet{clarkson2012sublinear}\footnote{The proof follows from the proofs of Lemmas 2.3, 2.4, 2.5, and 2.6 in Section 2 and Appendix B of \citet{clarkson2012sublinear}, with only small modifications to fit our new parameter choices. For instance, the original statement requires that $\eta\geq\sqrt{\frac{\log n}{T}}$, but the proofs actually work for $\eta\geq\sqrt{\frac{11\log n}{12T}}$.}:

\begin{lemma}[Section 2 of \citealt{clarkson2012sublinear}]\label{lem:2-3}
In \algo{sublinear-matrix}, the parameters $p_{t}$ in \lin{update-x} and $v_{t}$ in \lin{update-v} satisfy
\begin{align}\label{eqn:2-3}
\sum_{t\in\range{T}}p_{t}\trans v_{t}\leq\min_{i\in\range{n}}\sum_{t\in\range{T}}v_{t}(i)+\eta\sum_{t\in\range{T}}p_{t}\trans v_{t}^{2}+\frac{\log n}{\eta}
\end{align}
where $v_{t}^{2}$ is defined as $(v_{t}^{2})_{i}:=(v_{t})_{i}^{2}$ for all $i\in\range{n}$, as long as the update rule of $w_{t}$ is as in \lin{update-w} and $\var[v_{t}(i)^{2}]\leq 1$ for all $t\in\range{T}$ and $i\in\range{n}$. Furthermore, with probability at least $1-O(1/n)$,
\begin{align}\label{eqn:2-5}
\max_{i\in\range{n}}\sum_{t\in\range{T}}\big[v_{t}(i)-A_{i}x_{t}\big]&\leq 4\eta T;\\
\Big|\sum_{t\in\range{T}}A_{i_{t}}x_{t}-\sum_{t\in\range{T}}p_{t}\trans v_{t}\Big|&\leq 10\eta T,
\end{align}
with probability at least $5/7$, $\sum_{t\in\range{T}}p_{t}\trans v_{t}^{2}\leq 7T$.
\end{lemma}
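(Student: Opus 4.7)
The plan is to handle the lemma as two separate components: (i) the deterministic regret inequality \eqn{2-3}, which is a purely algebraic consequence of the multiplicative-weights update in \lin{update-w} together with the clipping in \lin{update-v} (the hypothesis $\var[v_{t}(i)^{2}]\leq 1$ is auxiliary and only enters in the concentration step); and (ii) the concentration bounds in \eqn{2-5} together with the variance-sum bound, all of which are martingale statements about the randomness introduced by the $\ell_{q}$-sampling of $j_{t}$ and the multinomial sampling of $i_{t}$. I would adapt the template of \citet{clarkson2012sublinear} and re-verify numerical constants against the specific parameter choices $T=\lceil(895\log n+4p)/\eps^{2}\rceil$ and $\eta=\sqrt{11\log n/(12T)}$ used here.

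For the regret inequality, I would define the potential $\Phi_{t}:=\|w_{t}\|_{1}$, expand $\Phi_{t+1}=\Phi_{t}(1-\eta p_{t}\trans v_{t}+\eta^{2}p_{t}\trans v_{t}^{2})$ directly from \lin{update-w}, and sum the resulting telescoping upper bound via $\ln(1+x)\leq x$. For the matching lower bound, I would track a single coordinate $i^{*}$ minimizing $\sum_{t}v_{t}(i)$, apply the elementary inequality $\ln(1-u+u^{2})\geq -u$ valid on $|u|\leq 1$ (which is ensured by clipping $|\eta v_{t}(i)|\leq 1$), and rearrange against $\Phi_{1}=n$ to recover \eqn{2-3}.

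For the concentration statements, I would set the filtration $\mathcal{F}_{t-1}:=\sigma(i_{1},j_{1},\ldots,i_{t-1},j_{t-1})$ and first verify unbiasedness of the $\ell_{q}$-sampling: $\E[\tilde v_{t}(i)\mid\mathcal{F}_{t-1}]=A_{i}x_{t}$, which holds because the sampling weights $x_{t}(j)^{q}/\|x_{t}\|_{q}^{q}$ and the reweighting factor $\|x_{t}\|_{q}^{q}/x_{t}(j)^{q-1}$ cancel in expectation. Since $|A_{i}x_{t}|\leq\|A_{i}\|_{p}\|x_{t}\|_{q}\leq 1\leq 1/\eta$, the true mean sits inside the clipping window, so the bias $|\E[v_{t}(i)\mid\mathcal{F}_{t-1}]-A_{i}x_{t}|$ is small. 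With the clipped magnitude $|v_{t}(i)|\leq 1/\eta$ in hand, Azuma--Hoeffding on the martingale differences $v_{t}(i)-A_{i}x_{t}$, followed by a union bound over $i\in\range{n}$, should produce the first inequality in \eqn{2-5}; the same scheme applied to $A_{i_{t}}x_{t}-p_{t}\trans v_{t}$ (whose conditional mean vanishes) delivers the second. The variance-sum bound then follows from $\E[p_{t}\trans v_{t}^{2}\mid\mathcal{F}_{t-1}]\leq 1$---proved via H\"older with conjugate exponents $p/2$ and $p/(p-2)$, using the key identity $(2-q)p/(p-2)=q$ equivalent to $1/p+1/q=1$---combined with Markov's inequality on $\sum_{t}p_{t}\trans v_{t}^{2}$.

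The main obstacle I anticipate is controlling the bias introduced by clipping: the raw estimator $\tilde v_{t}(i)$ can be enormous whenever $x_{t}(j_{t})$ is tiny, so the concentration argument cannot simply treat $v_{t}(i)$ as unbiased for $A_{i}x_{t}$. The delicate step will be showing that the truncated tail contributes only $O(\eta)$ to the bias via a Chebyshev-type estimate that consumes the variance hypothesis $\var[v_{t}(i)^{2}]\leq 1$, and that the exponent matching $(2-q)p/(p-2)=q$ produced by the interplay of $p$-norm OGD and $\ell_{q}$-sampling is precisely what keeps the underlying second moment bounded uniformly in $(q,x_{t},A_{i})$; absent this algebraic coincidence the whole argument would fail, which is also why the primal $p$-norm OGD and the dual $\ell_{q}$-sampling must be paired in exactly this way.
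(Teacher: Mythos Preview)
Your proposal is correct and mirrors exactly what the paper does: it does not give its own proof of this lemma but defers to Lemmas 2.3--2.6 of \citet{clarkson2012sublinear} (with a footnote noting that only the parameter relaxation $\eta\geq\sqrt{11\log n/(12T)}$ needs to be checked), and the potential-function MWU argument plus Azuma/Markov/Chebyshev-for-clipping template you sketch is precisely their proof. One small remark: the second-moment bound $\E[\tilde v_{t}(i)^{2}]\leq 1$ is not part of this lemma---it is taken as a hypothesis here and verified separately in the proof of \thm{main} via the $p$-th moment and \lem{p-to-2}---so your direct H\"older computation with exponents $p/2$ and $p/(p-2)$ and the identity $(2-q)p/(p-2)=q$ is a valid alternative route, but it belongs in the proof of \thm{main} rather than in the proof of this lemma.
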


We also need to prove the following inequality on different moments of random variables.
\begin{lemma}\label{lem:p-to-2}
Suppose that $X$ is a random variable on $\R$, and $p\geq 2$. If $\E[|X|^{p}]\leq 1$, then $\E[X^{2}]\leq 1$.
\end{lemma}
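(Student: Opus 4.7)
The plan is to apply Jensen's inequality to the convex function $t \mapsto t^{p/2}$ on $[0,\infty)$, exploiting that $p \ge 2$ guarantees $p/2 \ge 1$ and hence that this function is convex. Equivalently, this is a special case of the monotonicity of $L^r$-norms with respect to a probability measure (Lyapunov's inequality).

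Concretely, first I would note that $X^2 \geq 0$ is a nonnegative random variable, so Jensen's inequality applied to the convex function $\varphi(t) = t^{p/2}$ gives
\begin{align}
\bigl(\E[X^{2}]\bigr)^{p/2} \;=\; \varphi\bigl(\E[X^{2}]\bigr) \;\leq\; \E\bigl[\varphi(X^{2})\bigr] \;=\; \E\bigl[|X|^{p}\bigr] \;\leq\; 1.
\end{align}
Taking both sides to the power $2/p$ (valid since both are nonnegative and the map $s \mapsto s^{2/p}$ is monotone increasing on $[0,\infty)$) yields $\E[X^{2}] \leq 1$, as desired.

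There is essentially no obstacle here; the only subtlety worth flagging is ensuring all expectations are well-defined, which is immediate from $\E[|X|^{p}] \leq 1$ together with the inequality $|X|^{2} \leq 1 + |X|^{p}$ (valid for $p \geq 2$) guaranteeing $\E[X^{2}] < \infty$, so that the application of Jensen's inequality is unambiguous.
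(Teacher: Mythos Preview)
Your proof is correct and is essentially the same as the paper's: both are one-line invocations of Lyapunov's inequality (monotonicity of moments on a probability space), with the paper phrasing it via H\"older's inequality applied to $|x|^{2}\cdot 1$ with exponents $p/2$ and $p/(p-2)$, while you phrase it via Jensen applied to $t\mapsto t^{p/2}$. The two arguments are equivalent and equally direct.
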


\begin{proof}
Denote the probability density of $X$ as $\mu$. Then $\int_{-\infty}^{+\infty}|x|^{p}\d\mu_{x}=\E[|X|^{p}]\leq 1$. By H\"{o}lder's inequality, we have
\begin{align}
1&\geq\Big(\int_{-\infty}^{+\infty}|x|^{p}\d\mu_{x}\Big)^{2/p}\Big(\int_{-\infty}^{+\infty}1\d\mu_{x}\Big)^{1-2/p}\nonumber \\
&\geq \int_{-\infty}^{+\infty}|x|^{2}\cdot 1^{1-2/p}\d\mu_{x} =\int_{-\infty}^{+\infty}x^{2}\d\mu_{x},
\end{align}
hence the result follows.
\end{proof}

Now we are ready to prove our main theorem.

\begin{proof}[Proof of \thm{main}]
First,  $\tilde{v}_{t}(i)$ is an unbiased estimator of $A_{i}x_{t}$ as
\begin{align}
\E[\tilde{v}_{t}(i)]=\sum_{j_{t}=1}^{d}\frac{x_t(j_{t})^q}{\|x_t\|_{q}^q}\cdot \frac{A_{i}(j_{t})\|x_{t}\|_{q}^{q}}{x_{t}(j_{t})^{q-1}}
=A_{i}x_{t}.
\end{align}
Furthermore,
\begin{align}
\E[|\tilde{v}_{t}(i)|^{p}]&=\sum_{j_{t}=1}^{d}\frac{x_t(j_{t})^q}{\|x_t\|_{q}^q}\cdot \frac{|A_{i}(j_{t})|^{p}\|x_{t}\|_{q}^{pq}}{x_{t}(j_{t})^{p(q-1)}}\nonumber\\
&=\|A_{i}\|_{p}^{p}\|x_{t}\|_{q}^{p}\leq 1,
\end{align}
where the second equality follows from the identities $q=p(q-1)$ and $p=q(p-1)$, and the last inequality follows from the assumption that $A_{i}\in\B_{p}^{d}\ \forall i\in\range{n}$. By \lem{p-to-2}, $\E[\tilde{v}_{t}(i)^{2}]\leq 1$. Because the clip function in \lin{update-v} only makes variance smaller, this means that the conditions of \lem{2-3} are satisfied and we hence have \eqn{2-3}, rewritten below:
\begin{align}
\sum_{t\in\range{T}}p_{t}\trans v_{t}\leq\min_{i\in\range{n}}\sum_{t\in\range{T}}v_{t}(i)+\eta\sum_{t\in\range{T}}p_{t}\trans v_{t}^{2}+\frac{\log n}{\eta}.
\end{align}
Furthermore, \lem{2-3} implies that with probability $5/7-O(1/n)$ we have
\begin{align}\label{eqn:main-inequality-1}
\sum_{t\in\range{T}}A_{i_{t}}x_{t}\leq\min_{i\in\range{n}}\sum_{t\in\range{T}}v_{t}(i)+17\eta T+\frac{\log n}{\eta}.
\end{align}
Moreover, \eqn{2-5} gives $\sum_{t\in\range{T}}\big[v_{t}(i)-A_{i}x_{t}\big]\leq 4\eta T$, and hence $\min_{i\in\range{n}}\sum_{t\in\range{T}}v_{t}(i)\leq 4\eta T+\min_{i\in\range{n}}\sum\limits_{t\in\range{T}}A_{i}x_{t}$. Plugging this into \eqn{main-inequality-1}, we have
\begin{align}\label{eqn:main-inequality-2}
\sum_{t\in\range{T}}A_{i_{t}}x_{t}&\leq\sum_{t\in\range{T}}p_{t}\trans v_{t}+10\eta T\nonumber \\
&\leq\min_{i\in\range{n}}\sum_{t\in\range{T}}A_{i}x_{t}+21\eta T+\frac{\log n}{\eta},
\end{align}
with probability $(5/7-O(1/n))\cdot(1-O(1/n))\geq 2/3$.

On the other hand, by taking $u_{t}=A_{i_{t}}$ in \prop{lp-OGD},
\begin{align}\label{eqn:main-inequality-3}
T\sigma\leq\max_{x\in\B_{q}^{d}}\sum_{t=1}^{T}A_{i_{t}}x\leq \sum_{t=1}^{T}A_{i_{t}}x_{t}+\sqrt{2Tp},
\end{align}
since $\frac{1}{q-1}=\frac{p}{q}\leq p$. Combining \eqn{main-inequality-2} and \eqn{main-inequality-3}, we have
\begin{align}
\min_{i\in\range{n}}\sum_{t\in\range{T}}A_{i}x_{t}\geq T\sigma-\sqrt{2Tp}-21\eta T-\frac{\log n}{\eta}.
\end{align}
Consequently, the return $\bar{x}=\frac{1}{T}\sum_{t=1}^{T}x_t$ of \algo{sublinear-matrix} in \lin{return-bar-x} satisfies
\begin{align}
\min_{i\in\range{n}}A_{i}\bar{x}\geq \sigma-\sqrt{\frac{2p}{T}}-21\eta-\frac{\log n}{\eta T}.
\end{align}
To prove \eqn{matrix-game-goal}, it remains to show that $\sqrt{\frac{2p}{T}}+21\eta+\frac{\log n}{\eta T}\leq\epsilon$, which is equivalent to $\sqrt{2p}+21\sqrt{\frac{11\log n}{12}}+\sqrt{\frac{12\log n}{11}}\leq\sqrt{T}\epsilon$ by the definition of $\eta$. This is true because the AM-GM inequality implies that that LHS is at most
$2(\sqrt{2p})^{2}+2\bigl(21\sqrt{\frac{11\log n}{12}}+\sqrt{\frac{12\log n}{11}}\bigr)^{2}\leq 4p+895\log n\leq T\epsilon^{2}$.
\end{proof}

\lem{interpolation} combined with \thm{main} yields the classical result in \thm{main-intro}.


\section{A Sublinear Quantum Algorithm for General Matrix Games}
In this section, we give a quantum algorithm for solving the general $\ell_q$-$\ell_1$ matrix games. It closely follows our classical algorithm because they both use a primal-dual approach, where the primal part is composed of $p$-norm online gradient descent and the dual part is composed of multiplicative weight updates. However, we adopt quantum techniques to achieve speedup on both.

The intuition behind the quantum algorithm and the quantum speedup is that we measure quantum states to obtain random samples. These quantum states can be efficiently prepared (with cost $\tilde{O}(\sqrt{n})$ and $\tilde{O}(\sqrt{d})$). Mathematically, A quantum state can be represented by an $\ell_2$-normalized complex vector $\psi$ in the sense that measuring this quantum states yields outcome $i$ with probability $|\psi_i|^2$ (thus for every probability distribution there is a quantum state corresponding to it). Let us denote the quantum state for sampling from $w$ by $|w\rangle$  and the quantum state for sampling from $x$ by $|x\rangle$ (different from the notation in \algo{sublinear-matrix-quantum}). If we can maintain $|w\rangle$ and $|x\rangle$ in each iteration, then there is no need for classical updates, and preparing $|w\rangle$ and $|x\rangle$ becomes the bottleneck of the quantum algorithm.

The source of our quantum speedup comes from an important subroutine, \algo{lq-state-prep}, which is designed to prepare states for $\ell_q$-sampling. It uses standard Grover-based techniques to prepare states but we carefully keep track of the normalizing factor to facilitate $\ell_q$-sampling. We showed (in \prop{lq-state-prep} in the supplementary material) that preparing $|w\rangle$ costs $\tilde{O}(\sqrt{n})$ and preparing $|x\rangle$ costs $\tilde{O}(\sqrt{d})$. In the following, we give the high-level ideas of \algo{lq-state-prep}.
\begin{enumerate}
  \item We first create a quantum state corresponding to the uniform distribution, which is easy using Hadamard gates.
  \item For each entry, we create a state with the desired amplitude associated with 0, and an undesired amplitude associated to 1 (the unitarity of quantum operations necessitates the existence of this undesired term).
  \item Finally we use a technique called amplitude amplification to amplify the portion of the state corresponding to 0 for each entry, to get a state with only the desired amplitudes.
\end{enumerate}

\begin{algorithm}[htbp]
  Apply the minimum finding algorithm~\cite{durr1996quantum} to find $a_{\|q\|}:= \max_{i \in [n]}|a_i|^{q/2}$ in $O(\sqrt{n})$ time\;
  Prepare the uniform superposition $\frac{1}{\sqrt{n}}\sum_{r \in [n]}\ket{i}$\;
  Perform the following unitary transformations:
  \begin{align*}
    &\frac{1}{\sqrt{n}}\sum_{i \in [n]}\ket{i} \xmapsto{O_a} \frac{1}{\sqrt{n}}\sum_{i \in [n]}\ket{i}\ket{a_i} \\
                                              &\mapsto \frac{1}{\sqrt{n}}\sum_{i \in [n]}\ket{i}\ket{a_i}\left(\frac{a_i^{q/2}}{a_{\|q\|}}\ket{0} + \sqrt{1-\frac{|a_i|^q}{a_{\|q\|}^2}}\ket{1}\right) \nonumber \\
                                              &\xmapsto{O_a^{-1}} \frac{1}{\sqrt{n}}\sum_{i \in [n]}\ket{i}\ket{0}\left(\frac{a_i^{q/2}}{a_{\|q\|}}\ket{0} + \sqrt{1-\frac{|a_i|^q}{a_{\|q\|}^2}}\ket{1}\right);
  \end{align*}\\
  Discard the second register above and rewrite the state as
  \begin{align}\label{eq:lq-state-simple}
    \hspace{-2mm}\frac{\norm{a}_q^{q/2}}{\sqrt{n}a_{\|q\|}}\left(\frac{1}{\norm{a}_q^{q/2}}\sum_{i \in [n]}a_i^{q/2}\ket{i}\right)\ket{0} + \ket{a^{\bot}}\ket{1},
  \end{align}
  where $\ket{a^{\bot}} := \frac{1}{\sqrt{n}}\sum_{i \in [n]}\sqrt{1-\frac{|a_i|^q}{a_{\|q\|}^2}}\ket{i}$\;
  Apply amplitude amplification~\cite{brassard2002quantum} for the state in \eq{lq-state-simple} conditioned on the second register being 0. Return the output.
\caption{Prepare an $\ell_{q}$-pure state given an oracle to its coefficients.}
\label{algo:lq-state-prep}
\end{algorithm}

The details of our quantum algorithm for solving the general $\ell_q$-$\ell_1$ matrix games, \algo{sublinear-matrix-quantum}, is rather technical. To simplify the presentation, we postpone its pseudocode (\algo{sublinear-matrix-quantum}) to the supplementary material and highlight how it is different from \algo{sublinear-matrix} in the following.

\begin{itemize}
  \item For the primal part, we prepare a quantum state $\ket{y_t}$ for the $q$-norm OGD and measure it (in \lin{update-j-quantum}) to obtain a sample $j_t \in [d]$. The subtlety here is that we need to perform the $\ell_q$-sampling to the vector $y_t$; this is different from the  $\ell_{2}$-sampling in~\citet{li2019classification} which uses the fact that pure quantum states are $\ell_{2}$-normalized. To this end, we design \algo{lq-state-prep} for \emph{$\ell_q$-quantum state sampling}, which may be of independent interest; this algorithm is built upon a clever use of \emph{quantum amplitude amplification}, the technique behind the Grover search~\cite{grover1996fast}. Note that sampling according to $y_t$ is equivalent to sampling according to $x_t$ in \algo{sublinear-matrix}, because $x_t(j)^q/\norm{x_t}_q^q = y_t(j)^q/\norm{y_t}_q^q$. Moreover, it suffices to replace $\norm{x_t}_q^q/x_t(j_t)^{q-1}$ with $\norm{y_t}_q^q/(y_t(j_t)^{q-1}\,\max\{1, \norm{y_t}_q\})$ in \lin{update-oracle-Ot-quantum} of \algo{sublinear-matrix-quantum}. Similar to preparing $\ket{p_t}$, we use $\tilde{O}(\sqrt{d})$ queries to $O_A$ to prepare $y_t$, while classically we need to compute all the entries of $y_t$, which takes $O(d)$ queries.

  \item For the dual part, we prepare the multiplicative weight vector as a quantum state $\ket{p_t}$ and measure it (in \lin{update-it-quantum}) to obtain a sample $i_t \in [n]$. This adaption enables us to achieve the $\tilde{O}(\sqrt{n})$ dependence by using quantum amplitude amplification in the quantum state preparation: in \lin{update-oracle-Ot-quantum}, we implement the oracle $O_t$ and in \lin{prepare-pt-quantum} we use $\tilde{O}(\sqrt{n})$ queries to $O_t$ to prepare the state $\ket{p_{t+1}}$ for the next iteration. In contrast, classically we need to compute all the entries of $w_{t+1}$ to obtain the probability distribution $p_{t+1}$ for the next iteration, which takes $O(n)$ queries.
\end{itemize}

In general, \algo{sublinear-matrix-quantum} can be viewed as a template for achieving quantum speedups for online mirror descent methods: In this work, we focus on the general matrix games where the primal and dual are in the special relationship of $\ell_{p}$ and $\ell_{q}$ norms, but in principle it may be applicable to study other dualities in online learning.

We summarize the main quantum result as the following theorem, which states the correctness and time complexity of \algo{sublinear-matrix-quantum}. The relevant technical proofs are deferred to the supplementary material.

\begin{theorem}\label{thm:main-quantum}
  \algo{sublinear-matrix-quantum} returns a succinct classical representation\footnote{The algorithm stores $T = \tilde{O}(p/\epsilon^2)$ real numbers classically: $i_1, \ldots, i_T$ obtained from \lin{update-it-quantum} and $\widetilde{\norm{y_1}}_q, \ldots, \widetilde{\norm{y_T}}_q$ obtained from \lin{estimate-norm-quantum}. After that, each coordinate of $\bar{x}$ can be computed in time $\tilde{O}(p/\epsilon^2)$.} of a vector $\bar{w} \in \R^d$ such that
  \begin{align}
    \label{eq:qa-main}
    A_i \bar{x} \geq \max_{x\in\B_q^d}\min_{i'\in[n]} A_{i'}x - \epsilon \quad \forall i \in [n],
  \end{align}
  with probability at least $2/3$, and its total running time is $\tilde{O}\big(\frac{p^2\sqrt{n}}{\epsilon^4} + \frac{p^{3.5}\sqrt{d}}{\epsilon^7}\big)$. We can also assume $p=O(\log d/\epsilon)$ (\lem{interpolation}) and result in running time $\tilde{O}\big(\frac{\sqrt{n}}{\epsilon^{6}} + \frac{\sqrt{d}}{\epsilon^{10.5}}\big)$.
\end{theorem}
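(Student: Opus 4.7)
The plan is to mirror the correctness analysis in the proof of Theorem~\ref{thm:main}, showing that Algorithm~\ref{algo:sublinear-matrix-quantum} performs the same (random) updates as the classical Algorithm~\ref{algo:sublinear-matrix}, with the only difference being how the two samples $i_t\in[n]$ and $j_t\in[d]$ are generated in each iteration. Classically, $i_t$ is drawn from $p_t\propto w_t$ and $j_t$ from the $\ell_q$-proportion $x_t(j)^q/\|x_t\|_q^q=y_t(j)^q/\|y_t\|_q^q$. Quantumly, these samples come from measuring prepared states $\ket{p_t}$ and $\ket{y_t}$ in the computational basis. Once we show both preparations produce (approximately) the correct target states with sufficiently small total-variation error, Lemma~\ref{lem:2-3} and Proposition~\ref{prop:lp-OGD} transfer verbatim; in particular, the inequalities \eqn{main-inequality-1}--\eqn{main-inequality-3} still hold, yielding \eq{qa-main}.

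The bulk of the work is therefore the state-preparation/complexity analysis. First I would analyze Algorithm~\ref{algo:lq-state-prep}: minimum finding gives $a_{\|q\|}=\max_i|a_i|^{q/2}$ in $\tilde O(\sqrt n)$ queries, and then amplitude amplification boosts the amplitude $\|a\|_q^{q/2}/(\sqrt n\,a_{\|q\|})$ on $\ket{0}$ in the flag register to $\Theta(1)$, costing an additional factor of $\sqrt n\,a_{\|q\|}/\|a\|_q^{q/2}=O(\sqrt n)$ since by H\"older's inequality $\|a\|_q^q \ge a_{\|q\|}^{2}$ when $a_{\|q\|}=\max_i|a_i|^{q/2}$ (each entry satisfies $|a_i|^q\le a_{\|q\|}^{2}$ while $\|a\|_q^q\geq a_{\|q\|}^{2}$). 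This proves the advertised $\tilde O(\sqrt n)$ preparation cost for $\ket{p_t}$ and $\tilde O(\sqrt d)$ cost for $\ket{y_t}$, per query to the underlying oracles $O_t$ (for $w_t$) and $O_A$ augmented by the classical list $\{i_s\}_{s<t}$ (for $y_t$). Each query to an entry of $w_t$ or $y_t$ costs $O(t)=O(T)$ classical work to sum the running updates, giving per-iteration cost $\tilde O(\sqrt n\,T)$ and $\tilde O(\sqrt d\,T)$ for the two preparations. Summed over $T=\tilde O(p/\epsilon^2)$ iterations this yields $\tilde O(p^2\sqrt n/\epsilon^4)$ on the dual side; the extra $p^{1.5}/\epsilon^3$ factor on the primal side comes from the need to amplitude-estimate $\|y_t\|_q$ accurately enough that the re-weighting factor in \lin{update-oracle-Ot-quantum} introduces only $O(\epsilon)$ error in $v_t$, plus amplification to turn additive into multiplicative precision.

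The main obstacle, and where I would spend the most care, is error propagation. Unlike the classical algorithm, the amplitudes in $\ket{p_t}$ and $\ket{y_t}$ are only known approximately because (i)~amplitude amplification reaches a quantum state that is only $\delta$-close to the ideal, (ii)~$\|y_t\|_q$ must be replaced by an estimate $\widetilde{\|y_t\|}_q$ in both the re-weighting and the preparation call, and (iii)~$w_{t+1}$ is defined recursively in terms of $v_t$ which itself uses the estimated norm. I would bound the total $\ell_1$-distance between the ideal and actual joint distribution of $(i_1,j_1,\dots,i_T,j_T)$ by a union bound over $O(T)$ preparations, each set to precision $\poly(\epsilon/T)$; this increases the amplitude-amplification and amplitude-estimation costs only by $\polylog$ factors, keeping the runtime within the stated $\tilde O(\cdot)$ bounds. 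Finally, the succinct classical representation of $\bar x$ follows because each coordinate $\bar x(j)=\tfrac{1}{T}\sum_t x_t(j)$ is determined from the stored $(i_s,\widetilde{\|y_s\|}_q)$: one evaluates $y_t(j)$ in $O(T)$ time via the OGD formula in \lin{update-y}, then divides by $\max\{1,\widetilde{\|y_t\|}_q\}$, giving the claimed $\tilde O(p/\epsilon^2)$ post-processing cost per entry. Applying \lem{interpolation} for large $p$ yields the second bound.
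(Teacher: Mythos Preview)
Your high-level plan matches the paper's: quantum state preparation replaces the classical sampling in both the primal and dual, and correctness is inherited from the analysis of Theorem~\ref{thm:main} plus an error-propagation argument. However, the claim that \eqn{main-inequality-1}--\eqn{main-inequality-3} ``transfer verbatim'' is not right, and the paper handles the error differently than you propose. The quantum algorithm uses the \emph{estimate} $\widetilde{\|y_t\|}_q$ (and also $\widetilde{\|A_{i_t}\|}_p$, which you omit) when forming $\tilde v_t(i)$ and $y_{t+1}$, so even conditioned on identical samples $(i_t,j_t)$ the sequences $(v_t,w_t,p_t)$ diverge from the classical ones; this is a computation error, not a sampling error, and is not captured by bounding the TV distance of $(i_1,j_1,\dots,i_T,j_T)$ alone. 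Rather than a coupling argument, the paper isolates this effect with a deterministic perturbation bound (Lemma~\ref{lem:va-vt}): a relative error $\delta$ in $\|y_t\|_q^q$ forces $|\tilde v_{t,\mathrm{approx}}(i)-\tilde v_{t,\mathrm{true}}(i)|\le\delta/\eta$. Taking $\delta=\eta^{2}$ makes this perturbation $\eta$, which adds at most $\eta T$ to each of the sums in \eqn{main-inequality-1}--\eqn{main-inequality-2}; that is why the constants grow (the $21$ becomes $26$, and the $895$ in the definition of $T$ becomes $1346$) rather than carrying over unchanged. Your TV-distance route could in principle be pushed through, but it must thread the recursive dependence of $w_{t+1}$ on the estimated norms and is strictly less direct than the paper's argument.

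On the complexity side, your attribution of the extra $p^{1.5}/\epsilon^{3}$ on the primal side to norm estimation is qualitatively right but misses the specific ingredient that yields the stated bound (and the improvement over the $1/\epsilon^{8}$ of prior work). The paper's $\tilde O(p^{3.5}\sqrt d/\epsilon^{7})$ arises as the product of four factors: $T=\tilde O(p/\epsilon^{2})$ iterations; an oracle for a single entry of $y_t$ costing $\tilde O(p/\epsilon^{2})$ queries to $O_A$; amplitude estimation to relative precision $\delta=\eta^{2}$ costing $\tilde O(\sqrt d/\delta)=\tilde O(p\sqrt d/\epsilon^{2})$ oracle calls; and a normalization overhead of $\tilde O(\sqrt p/\epsilon)$ coming from an entrywise bound on $y_{t,j}$. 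The last step is the non-obvious one: each summand in the OGD update satisfies
\[
\frac{|A_{i_\tau,j}|^{p-1}}{\|A_{i_\tau}\|_p^{p-2}}\le\frac{|A_{i_\tau,j}|^{p-1}}{(|A_{i_\tau,j}|^{p})^{(p-2)/p}}=|A_{i_\tau,j}|\le 1,
\]
whence $|y_{t,j}|\le\sqrt{(q-1)T/2}=\tilde O(\sqrt p/\epsilon)$. Without this observation the normalization required before invoking Lemma~\ref{lem:norm-est}, and hence the $\epsilon$-dependence, would be worse.
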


Moreover, \algo{sublinear-matrix-quantum} enjoys the following features:
\begin{itemize} [leftmargin=5mm]
\item \textbf{Simple quantum input:} \algo{sublinear-matrix-quantum} uses the standard quantum input model and needs not to use any sophisticated quantum data structures, such as quantum random access memory (QRAM) in some other quantum machine learning applications, to achieve speedups.

\item \textbf{Hybrid classical-quantum feature:} \algo{sublinear-matrix-quantum} is also highly classical-quantum hybrid: the quantum part is isolated by pieces of state preparations connected by classical processing. In addition, it only has $O(\frac{\log n +p}{\eps^{2}})$ iterations, which implies that the corresponding quantum circuit is shallow and can potentially be implemented even on near-term quantum machines~\cite{Preskill2018NISQ}.

\item \textbf{Classical output:} The output of \thm{main-quantum} is completely \emph{classical}. Compared to quantum algorithms whose output is a quantum state and may incur overheads~\citep{aaronson2015read}, \algo{sublinear-matrix-quantum} guarantees minimal overheads and can be directly used for classical applications.
\end{itemize}


\section{Applications}
We give two applications that generically follow from our classical and quantum $\ell_{q}$-$\ell_{1}$ matrix game solvers.

\subsection{Approximate Carath{\'e}odory problem}\label{sec:Caratheodory}
The exact Carath{\'e}odory problem is a fundamental result in linear algebra and convex geometry: every point $u\in\R^{d}$ in the convex hull of a vertex set $S\subset\R^{d}$ can be expressed as a convex combination of $d+1$ vertices in $S$. Recently, a breakthrough result by~\citet{barman2015approximating} shows that if $S\subset\B_{p}^{d}$, i.e., $S$ is in the $\ell_{p}$-norm unit ball, then there exists a point $u'$ s.t. $\|u-u'\|_{p}\leq\epsilon$ and $u'$ is a convex combination of $O(p/\epsilon^{2})$ vertices in $S$. The follow-up work by~\citet{mirrokni2017tight} proved a matching lower bound $\Omega(p/\epsilon^{2})$, and~\citet{combettes2019revisiting} can give better bounds under stronger assumptions on $S$ or $u$.

Currently, the best-known time complexity of solving the approximate Carath{\'e}odory problem is $O(ndp/\epsilon^{2})$ by Theorem 3.5 of~\citet{mirrokni2017tight}. We give classical and quantum sublinear algorithms:

\begin{corollary}\label{cor:Caratheodory}
Suppose that $S\subset\B_{p}^{d}$, $|S|=n$, and $u$ is in the convex full of $S$. Then we can find a convex combination $\sum_{i=1}^{k}x_{i}v_{i}$ such that $v_{i}\in S$ for all $i\in\range{k}$, $k=O((p+\log n)/\epsilon^{2})$, and $\|\sum_{i=1}^{k}x_{i}v_{i}\|_{p}\leq\epsilon$, using a classical algorithm with running time $O\big(\frac{(n+d)(p+\log n)}{\epsilon^{2}}\big)$ or a quantum algorithm with running time $\tilde{O}\big(\frac{p^2\sqrt{n}}{\epsilon^4} + \frac{p^{3.5}\sqrt{d}}{\epsilon^7}\big)$. We can also assume $p=O(\log d/\epsilon)$ (\lem{interpolation}) and result in running time $O\big(\frac{(n+d)(\log d/\epsilon+\log n)}{\epsilon^{2}}\big)$ and $\tilde{O}\big(\frac{\sqrt{n}}{\epsilon^{6}} + \frac{\sqrt{d}}{\epsilon^{10.5}}\big)$, respectively.
\end{corollary}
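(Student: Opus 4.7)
The plan is to reduce the approximate Carath\'eodory problem to the $\ell_q$-$\ell_1$ matrix game solved by \thm{main} and \thm{main-quantum}. Given $S = \{v_1,\ldots,v_n\} \subset \B_p^d$ with $u$ in its convex hull (so $u \in \B_p^d$ as well), I form the matrix $A \in \R^{n \times d}$ whose $i$-th row is $A_i = (v_i - u)\trans/2$, so that $\|A_i\|_p \leq 1$ by the triangle inequality. Since $u = \sum_i \alpha_i v_i$ for some $\alpha \in \Delta_n$ yields $A\trans \alpha = 0$, the minimax theorem gives $\sigma := \max_{x \in \B_q^d}\min_{\vecc{p} \in \Delta_n}\vecc{p}\trans A x = 0$.

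Carath\'eodory wants a sparse approximate strategy for the \emph{inner} player $\vecc{p}$, not the $\bar{x}$ returned by \thm{main}. I will read this out of the sampled indices $i_1,\ldots,i_T$ drawn in \lin{update-it} of \algo{sublinear-matrix} (or \lin{update-it-quantum} of the quantum version), by setting $\bar{\vecc{p}} := \frac{1}{T}\sum_{t=1}^T \vec{e}_{i_t}$. By construction $\bar{\vecc{p}}$ is a convex combination supported on at most $T = O((p+\log n)/\epsilon^2)$ vertices of $S$, matching the claimed sparsity. The key claim is $\max_{x \in \B_q^d}\bar{\vecc{p}}\trans A x \leq \epsilon/2$. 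This follows by combining two bounds that live inside the proof of \thm{main}: \prop{lp-OGD} applied with losses $u_t = A_{i_t}$ yields $\max_{x \in \B_q^d}\frac{1}{T}\sum_t A_{i_t} x - \frac{1}{T}\sum_t A_{i_t} x_t \leq \sqrt{2p/T}$, where the left-hand maximum is precisely $\max_x \bar{\vecc{p}}\trans A x$; and \eqn{main-inequality-2} divided by $T$ gives $\frac{1}{T}\sum_t A_{i_t}x_t \leq \min_i A_i \bar{x} + O(\epsilon) \leq \sigma + O(\epsilon) = O(\epsilon)$. Chaining them and fixing $T = \Theta((p+\log n)/\epsilon^2)$ proves the claim.

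Unscaling by the factor of $2$ finishes the proof: $\max_{x \in \B_q^d} x\trans A\trans \bar{\vecc{p}} = \|A\trans\bar{\vecc{p}}\|_p$ by dual-norm duality, and $A\trans \bar{\vecc{p}} = (V\bar{\vecc{p}} - u)/2$ (using $\bar{\vecc{p}} \in \Delta_n$), so $\|V\bar{\vecc{p}} - u\|_p \leq \epsilon$, where $V$ is the matrix with columns $v_1,\ldots,v_n$. The running times are inherited verbatim from \thm{main} and \thm{main-quantum}, and the regime $p = \Omega(\log d/\epsilon)$ is absorbed by \lem{interpolation} exactly as in \thm{main-intro}.

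The main obstacle is the second step: \thm{main} certifies only the primal quality of $\bar{x}$, so one must reopen the proof of \algo{sublinear-matrix} to verify that the empirical distribution of the sampled indices is itself an $\epsilon$-approximate dual strategy. In the quantum case this additionally requires checking that the classical list $i_1,\ldots,i_T$ recorded by \algo{sublinear-matrix-quantum} (guaranteed by the footnote to \thm{main-quantum}) and the state preparation via \algo{lq-state-prep} together faithfully simulate the $\vecc{p}_t$-sampling step of the classical algorithm; once this is verified the chaining argument above is identical in both settings.
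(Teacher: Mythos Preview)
Your reduction is exactly the paper's: form the matrix with rows $(v_i-u)/2$, observe each row lies in $\B_p^d$, and feed it to \thm{main}/\thm{main-quantum}. The paper invokes Sion's theorem to justify swapping $\min$ and $\max$ and then simply says ``the result follows''; you instead compute $\sigma=0$ directly from the existence of $\alpha\in\Delta_n$ with $A\trans\alpha=0$, which is equivalent and arguably cleaner.

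Where you go further than the paper is in the extraction step. The paper's proof stops at the reduction and never says how to read off the sparse convex combination, even though \thm{main} only certifies $\bar{x}$, not a dual $\vect{p}$. You correctly identify this gap and fill it by taking $\bar{\vect{p}}=\frac{1}{T}\sum_t\vec{e}_{i_t}$ and reopening the proof of \thm{main}: chaining \prop{lp-OGD} (which bounds $\max_x\bar{\vect{p}}\trans Ax-\frac{1}{T}\sum_tA_{i_t}x_t$) with \eqn{main-inequality-2} (which bounds $\frac{1}{T}\sum_tA_{i_t}x_t$ above by $\min_iA_i\bar{x}+O(\epsilon)\leq\sigma+O(\epsilon)=O(\epsilon)$) is exactly right, and the dual-norm identity $\max_{x\in\B_q^d}x\trans A\trans\bar{\vect{p}}=\|A\trans\bar{\vect{p}}\|_p$ finishes it. The sparsity bound $k\leq T=O((p+\log n)/\epsilon^2)$ and the running times are then immediate. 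Your proof is correct and in fact more complete than the paper's on this point.
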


\begin{proof}
We denote the matrix $V:=(v_{1};v_{2};\cdots;v_{n})$ where $v_{i}$ is the $i^{\text{th}}$ element in $S$. Note that the approximate Carath{\'e}odory problem can be formed as $\min_{\vect{p}\in\Delta_{n}}\|V\trans\vect{p}-u\|_{p}$. In addition, by H\"{o}lder's inequality $\|y\|_{p}=\max_{x:\|x\|_{q}\leq 1}y\trans x$; therefore, we obtain the following minimax matrix game:
\begin{align}\label{eqn:Caratheodory-1}
\min_{\vect{p}\in\Delta_{n}}\max_{x\in\B_{q}^{d}}\ (\vect{p}\trans V-u\trans)x.
\end{align}
We denote $U=(u;u;\cdots;u)\in\R^{n\times d}$, i.e., all the $n$ rows of $U$ are $u$. Then we have $(\vect{p}\trans V-u\trans)x=2\vect{p}\trans\frac{V-U}{2}x$. Furthermore, since $u, v_{i}\in\B_{p}^{d}$ for all $i\in\range{n}$, each row of $\frac{V-U}{2}$ is also in $u, v_{i}\in\B_{p}^{d}$. Finally, by the Sion's Theorem~\cite{sion1958general} we can switch the order of the $\min$ and $\max$ in \eqn{Caratheodory-1}. In all, to solve the approximate Carath{\'e}odory problem with precision $\epsilon$, it suffices to solve the maximin game
\begin{align}\label{eqn:Caratheodory-2}
\max_{x\in\B_{q}^{d}}\min_{\vect{p}\in\Delta_{n}}\ \vect{p}\trans\frac{V-U}{2}x
\end{align}
with precision $\frac{\epsilon}{2}$. This is exactly \eqn{matrix-game-defn-c}, thus the result follows from \thm{main} and \thm{main-quantum}.
\end{proof}

Compared to~\citet{mirrokni2017tight}, we pay a $\log n$ overhead in the cardinality of the convex combination, but in time complexity the dominating term $nd$ is significantly improved to $n+d$. We also give the first sublinear quantum algorithm. Note that as~\citet{mirrokni2017tight} pointed out, the approximate Carath{\'e}odory problem has wide applications in machine learning and optimization, including support vector machines (SVMs), rounding in polytopes, submodular function minimization, etc. We elaborate the details of SVMs below, and leave out the details of other applications as the reductions are direct.

\subsection{$\ell_{q}$-margin support vector machine (SVM)}\label{sec:SVM}
When we solve the $\ell_{q}$-$\ell_{1}$ matrix game in \algo{sublinear-matrix}, we apply $\ell_{q}$-sampling where $j_t =j$ with probability $\x(j)^q/\|\x\|_{q}^q$ for any $j\in\range{d}$. The key reason of the success of \algo{sublinear-matrix} is because the expectation of the random variable $A_{i}(j_{t})\|x_{t}\|_{q}^{q}/x_{t}(j_{t})^{q-1}$ in \lin{update-random-variable} is $\A_{i}\x$, which is unbiased.

If we consider some alternate random variables, we can potentially solve a maximin game in $\ell_{q}$-$\ell_{1}$ norm with respect to some nonlinear functions of the matrix. A specific problem of significant interest is the $\ell_{q}$-margin support vector machine (SVM), where we are given $n$ data points $X_{1},\ldots,X_{n}$ in $\R^{d}$ and a label vector $y\in\{1,-1\}^{n}$. The goal is to find a separating hyperplane $w\in\R^{d}$ of these data points with the largest margin under the $\ell_{q}$-norm loss, i.e.,
\begin{align}
\sigma_{\SVM}:=\max_{w\in\R^{d}}\min_{i\in\range{n}} 2y_{i}\cdot X_{i}\trans w-\|w\|_{q}^{q}.
\end{align}
Without loss of generality, we assume $y_{i}=1$ for all $i\in\range{n}$, otherwise we take $X_{i}\leftarrow (-1)^{y_{i}}\cdot X_{i}$. In this case, the random variable $2X_{i}(j)\|w\|_{q}^{q}/w(j)^{q-1}-\|w\|_{q}^{q}$ is unbiased under $\ell_{q}$-sampling on $j$:
\begin{align}
\E\Big[\frac{2X_{i}(j)\|w\|_{q}^{q}}{w(j)^{q-1}}-\|w\|_{q}^{q}\Big]=2X_{i}\trans w-\|w\|_{q}^{q}.
\end{align}
Note that $\sigma_{\SVM}\geq 0$ since $2X_{i}\trans w-\|w\|_{q}^{q}=0$ for all $i\in\range{n}$ when $w=0$. For the case $\sigma_{\SVM}>0$ and taking $0<\epsilon<\sigma_{\SVM}$, similar to \thm{main} and \thm{main-quantum} we have:
\begin{corollary}\label{cor:SVM}
To return a vector $\bar{w}\in\R^{d}$ such that with probability at least $2/3$,
\begin{align}\label{eqn:SVM-goal}
\min_{i\in\range{n}}2X_{i}\bar{w}-\|\bar{w}\|_{q}^{q}\geq\sigma_{\SVM}-\epsilon>0,
\end{align}
there is a classical algorithm that achieves this with $O\big(\frac{(n+d)(p+\log n)}{\epsilon^{2}}\big)$ time and a quantum algorithm that achieves this with $\tilde{O}\big(\frac{p^2\sqrt{n}}{\epsilon^4} + \frac{p^{3.5}\sqrt{d}}{\epsilon^7}\big)$ time. We can also assume $p=O(\log d/\epsilon)$ (\lem{interpolation}) and result in running time $O\big(\frac{(n+d)(\log d/\epsilon+\log n)}{\epsilon^{2}}\big)$ and $\tilde{O}\big(\frac{\sqrt{n}}{\epsilon^{6}} + \frac{\sqrt{d}}{\epsilon^{10.5}}\big)$, respectively.
\end{corollary}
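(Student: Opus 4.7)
The plan is to replay the primal-dual analysis of \thm{main} and \thm{main-quantum} with the nonlinear per-iterate objective $f_t(w) := 2X_{i_t}^\top w - \|w\|_q^q$ replacing the bilinear $A_{i_t}^\top x$, while using the unbiased estimator $\tilde v_t(i) := 2X_i(j_t)\|w_t\|_q^q/w_t(j_t)^{q-1} - \|w_t\|_q^q$ exhibited in the paragraph preceding the corollary. First I would reduce the unconstrained maximization to a ball-constrained one: Hölder's inequality with the standing assumption $\|X_i\|_p\leq 1$ yields $2X_i^\top w - \|w\|_q^q \leq 2\|w\|_q - \|w\|_q^q$, whose maximum is attained at $\|w\|_q = (2/q)^{1/(q-1)} =: R$, so every optimizer satisfies $\|w^\ast\|_q\leq R$ and we may restrict to $w\in R\cdot\B_q^d$ and then rescale to the unit ball.

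Second, I would run \algo{sublinear-matrix} (classically) or \algo{sublinear-matrix-quantum} (quantumly) with two changes: the $\ell_q$-sample in \lin{update-random-variable} produces $\tilde v_t(i)$ above rather than the bilinear estimator, and the $p$-norm OGD update in \lin{update-y} uses the full subgradient $2X_{i_t} - q\,\sgn(w_t)\odot|w_t|^{q-1}$ of $f_{i_t}$; its $\ell_p$-norm is $O(1)$ because the identity $p(q-1)=q$ gives $\|q\,\sgn(w_t)\odot|w_t|^{q-1}\|_p = q\|w_t\|_q^{q-1}$. Since $f_t$ is concave in $w$, the tangent inequality $f_t(w^\ast)-f_t(w_t)\leq \nabla f_t(w_t)^\top(w^\ast-w_t)$ lets us upper-bound the true regret by the linearized regret, so \prop{lp-OGD} still yields an $O(\sqrt{Tp})$ bound. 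The multiplicative-weights analysis of \lem{2-3} is unaffected because the deterministic shift $-\|w_t\|_q^q$ is identical across all $i\in[n]$ and therefore cancels in the MWU ratios and in $\max_i\sum_t[v_t(i)-f_i(w_t)]$; the variance bound follows from the same calculation as in the proof of \thm{main}, using $q=p(q-1)$ and $p=q(p-1)$ to obtain $\E[|2X_i(j_t)\|w_t\|_q^q/w_t(j_t)^{q-1}|^p] = 2^p\|X_i\|_p^p\|w_t\|_q^p = O(1)$, after which \lem{p-to-2} gives $\E[\tilde v_t(i)^2] = O(1)$ and the clipping step is well-calibrated.

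With these pieces in place, averaging $\bar w := \frac{1}{T}\sum_{t=1}^T w_t$ over $T=\tilde O(p/\epsilon^2)$ iterations produces the guarantee \eqn{SVM-goal} with success probability at least $2/3$ exactly as in the proof of \thm{main}; the quantum complexity follows by preparing $\ket{p_t}$ via amplitude amplification and sampling from $\ket{w_t}$ via \algo{lq-state-prep} in \lin{update-j} and \lin{update-it} of the quantum version, as already done in \thm{main-quantum}. The main obstacle I anticipate is tracking the $R$-dependent constants introduced by the enlarged feasible set and by the nonlinear term $-\|w\|_q^q$ without degrading the $\epsilon$-dependence; this is handled by choosing the clipping threshold and the step size $\eta$ with the appropriate $R$-rescaling, which only changes the bounds by constants since $R$ depends only on $q$, together with a routine verification that the additive $-\|w_t\|_q^q$ shift (being data-independent) never enters the concentration inequalities nontrivially.
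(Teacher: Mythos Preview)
The paper gives no detailed proof of this corollary; it only exhibits the unbiased estimator $\tilde v_t(i)=2X_i(j_t)\|w_t\|_q^q/w_t(j_t)^{q-1}-\|w_t\|_q^q$ and then asserts that the result follows ``similar to \thm{main} and \thm{main-quantum}''. Your proposal is a faithful and essentially correct fleshing-out of exactly that sketch: you use the same estimator, the same primal-dual template, the standard concave-linearization trick to feed nonlinear $f_t$ into \prop{lp-OGD}, and the observation that the data-independent shift $-\|w_t\|_q^q$ cancels in both the MWU regret and the martingale terms of \lem{2-3}. The moment computation you outline is the right one, with $p(q-1)=q$ giving $\E\bigl[\,|2X_i(j_t)\|w_t\|_q^q/w_t(j_t)^{q-1}|^p\,\bigr]=2^p\|X_i\|_p^p\|w_t\|_q^p$, and Jensen on the concave map $w\mapsto 2X_i^\top w-\|w\|_q^q$ justifies outputting the average $\bar w$.

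One caveat worth recording: your remark that ``$R$ depends only on $q$'' and therefore ``only changes the bounds by constants'' is correct for any \emph{fixed} $q\in(1,2]$, but $R=(2/q)^{1/(q-1)}$ grows like $2^{\Theta(p)}$ as $q\downarrow 1$. This does not affect the first pair of complexity bounds in the corollary (where $p$ is a fixed parameter and $R$ may legitimately be absorbed into the hidden constant), but it does threaten the \emph{interpolated} bounds obtained via \lem{interpolation}, since there one sets $p=\Theta(\log d/\epsilon)$ and an $R^{O(1)}$ factor would be $d^{\Theta(1/\epsilon)}$. The paper does not address this point either, so your proof is at the same level of rigor as the paper's own argument; just be aware that the ``routine verification'' you defer is routine only for the non-interpolated statement.
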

Notice that classical sublinear algorithms for $\ell_{2}$-SVMs have been given~\cite{clarkson2012sublinear,hazan2011svm}, and there is also a sublinear quantum algorithm for $\ell_{2}$-SVMs in~\citet{li2019classification}. We essentially generalize their results to the $l_{q}$-norm cases based on our new general matrix game solvers in \thm{main} and \thm{main-quantum}.


\section{Classical and Quantum Lower Bounds}
For both our classical and quantum algorithms for general matrix games, we can prove matching classical and quantum lower bounds in $n$ and $d$ for constant $\epsilon$:
\begin{theorem}\label{thm:matrix-lower-main}
Assume $0<\epsilon<0.04$. Then to return an $\bar{\x}\in\B_{q}^{d}$ satisfying
\begin{align}
\A_{j}\bar{\x}\geq \max_{x\in\B_{q}^{d}}\min_{i\in\range{n}}\A_{i}\x-\epsilon\quad\forall\,j\in\range{n},
\end{align}
with probability at least $2/3$, we need $\Omega(n+d)$ classical queries or $\Omega(\sqrt{n}+\sqrt{d})$ quantum queries.
\end{theorem}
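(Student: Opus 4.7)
The plan is to prove the $\Omega(n)$/$\Omega(\sqrt{n})$ and $\Omega(d)$/$\Omega(\sqrt{d})$ halves separately and combine them. In each case I would reduce approximate $\ell_q$-$\ell_1$ matrix-game solving to distinguishing an unmarked input from one with a single planted index, and invoke the standard $\Omega(n)$ classical and $\Omega(\sqrt{n})$ quantum (Grover) lower bounds for unstructured search.

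For the $\Omega(n)$ and $\Omega(\sqrt{n})$ bound, I would construct two families of matrices $A \in \R^{n \times d}$ parametrized by a hidden index $i^{\ast} \in [n] \cup \{\perp\}$. When $i^{\ast} = \perp$, take a baseline whose rows are all identical, chosen (e.g.\ proportional to the all-ones vector, scaled so each row lies in $\B_p^d$) so that the $\ell_q$-$\ell_1$ game value $\sigma = \max_{x \in \B_q^d} \min_{i \in [n]} A_i x$ can be computed in closed form via H\"older duality $1/p + 1/q = 1$ and equals an explicit quantity $f_1(n,d,p)$. When $i^{\ast} \in [n]$, modify row $i^{\ast}$ (e.g.\ negate it, or replace it by a row that forces the previous argmax $x^{\ast}$ to yield a much smaller inner product) so that the value drops to a strictly smaller $f_2(n,d,p)$. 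The key analytic step is to show $f_1 - f_2 > 2\epsilon$ for every $\epsilon < 0.04$, with both $f_1, f_2$ nonnegative and monotone in the relevant parameters so the separation persists throughout the regime $q \in (1,2]$. Given such a separation, any $(\sigma, \epsilon)$-approximate solver returns $\bar{x}$ with $\min_i A_i \bar{x}$ on one side of the midpoint $(f_1 + f_2)/2$ iff $i^{\ast} = \perp$, converting the algorithm into a search distinguisher and yielding the claimed $\Omega(n)$ and $\Omega(\sqrt{n})$ lower bounds.

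For the $\Omega(d)$ and $\Omega(\sqrt{d})$ bound, I would symmetrize: hide a marked column index $j^{\ast} \in [d] \cup \{\perp\}$ and build a matrix in which perturbing column $j^{\ast}$ shifts the equilibrium value from some $f_1$ to some $f_2$ with the same monotone-nonnegative $2\epsilon$-separation property, then apply the same search reduction. The main obstacle in both directions is obtaining these two game-value functions $f_1, f_2$ sharply enough for general $q$: unlike the $\ell_2$-$\ell_1$ case of \citet{li2019classification}, where one can exploit spectral-type identities, here the constrained maximization over $\B_q^d$ interacts nontrivially with the inner $\min_i A_i x$. I would attack this by leveraging the high symmetry of the baseline (identical rows collapse the inner min, leaving a single $\ell_q$-constrained linear program), using the Lagrangian/H\"older characterization of $\ell_q$-norm linear maximization to write the optimizer in terms of an $\ell_p$-normalized dual vector, and verifying monotonicity and nonnegativity by direct differentiation in $p$. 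Once $f_1$ and $f_2$ with the required $2\epsilon$-separation are in hand, combining the row-side and column-side reductions yields \thm{matrix-lower-main}.
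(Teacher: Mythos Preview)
Your reduction-to-search strategy is the same as the paper's, but there is a real gap in the distinguisher you propose. You want to decide the hidden case by checking on which side of $(f_1+f_2)/2$ the value $\min_i A_i\bar{x}$ falls. The trouble is that once the solver hands you $\bar{x}$, evaluating $\min_i A_i\bar{x}$ still requires knowing the rows $A_i$ (or, in the column version, which column is perturbed), and those are precisely what the search oracle hides; computing that minimum costs $\Theta(n)$ (resp.\ $\Theta(d)$) additional oracle calls and wipes out the lower bound. A value-gap $f_1-f_2>2\epsilon$ by itself does not suffice, because in the marked instance the guarantee $\min_i A_i\bar{x}\ge f_2-\epsilon$ does not prevent, say, $r\cdot\bar{x}$ from lying anywhere in $[f_2-\epsilon,f_1]$. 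The paper sidesteps this by designing the hard instance so that the hidden information is recoverable \emph{from the coordinates of $\bar{x}$ alone}, with zero further queries: in its Case~2 any $\epsilon$-good $\bar{x}$ must have $\bar{x}_l\ge 1-0.04\cdot 2^{1/p}$ for the hidden column $l$, whereas in Case~1 the extra row forces $\bar{x}_l$ below that threshold while still leaving $l=\arg\max_{i\ge 2}\bar{x}_i$. Incidentally, the paper's $f_1(q),f_2(q)$ are not the two game values (those are $\sigma_1=\big(1+(2^{1-1/q}+1)^q\big)^{-1/q}$ and $\sigma_2=2^{-1/p}$, computed exactly) but auxiliary functions whose sign over $q\in(1,2]$ certifies that this coordinate test is never fooled; this is where the constant $0.04$ comes from.

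A second, structural difference: the paper uses one family of instances that simultaneously hides a row index $k\in\{3,\dots,n\}$ and a column index $l\in\{2,\dots,d\}$, so a single approximate solution recovers both search answers. Your plan of two separate constructions is fine in principle, but on the $d$-side you will still need a query-free test (largest coordinate of $\bar{x}$, or a coordinate threshold) rather than a value comparison, and you will then need an $\ell_q$-analogue of the argument that only one coordinate of a unit-$\ell_q$ vector can exceed $2^{-1/q}$. Once you swap in a coordinate-based distinguisher and verify the resulting inequalities hold uniformly for $q\in(1,2]$, the rest of your outline goes through.
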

Due to the space limitation, we postpone the proof details of \thm{matrix-lower-main} to the supplementary material.


\section{Conclusions}
We give sublinear algorithms for solving general $\ell_{q}$-$\ell_{1}$ matrix games for any $q\in (1,2]$. Our classical and algorithms run in time $O(\frac{(n+d)(p+\log n)}{\epsilon^{2}})$ and $\tilde{O}(\frac{p^2\sqrt{n}}{\epsilon^4} + \frac{p^{3.5}\sqrt{d}}{\epsilon^7})$, respectively; both bounds are tight up to poly-logarithmic factors in $n$ and $d$. Our results can be applied to solve the approximate Carath{\'e}odory problem and the $\ell_{q}$-margin SVMs.

Our paper raises a couple of natural open questions for future work. For instance:
\begin{itemize}
\item Can we give sublinear algorithms for $\ell_{p}$-$\ell_{1}$ matrix games where $p>2$? Technically, this will probably require a $q^{\text{th}}$ moment multiplicative weight lemma to replace \lem{2-3}.

\item Can we give quantum algorithms that achieve speedup of variance-reduced methods for solving matrix games, such as the state-of-the-art result in~\citet{carmon2019variance}?
\end{itemize}


\section{Ethics Statement}
This work is purely theoretical. Researchers working on learning theory and quantum computing may benefit from our results. In the long term, once fault-tolerant quantum computers have been built, our results may find practical applications in matrix game scenarios arising in the real world. As far as we are aware, our work does not have immediate negative ethical impact.


\section{Acknowledgements}
TL thanks Adrian Vladu for many helpful discussions, as well as Yair Carmon for the discussions about his paper~\citep{carmon2019variance}. TL was supported by an IBM PhD Fellowship, an QISE-NET Triplet Award (NSF grant DMR-1747426), the U.S. Department of Energy, Office of Science, Office of Advanced Scientific Computing Research, Quantum Algorithms Teams program, ARO contract W911NF-17-1-0433, and NSF grant PHY-1818914. CW was supported by Scott Aaronson's Vannevar Bush Faculty Fellowship. SC and XW were partially  supported  by  the  U.S. Department  of  Energy,  Office  of  Science,  Office  of Advanced Scientific Computing Research,  Quantum Algorithms Team program, and were also partially supported  by  the  U.S.  National  Science  Foundation grant CCF-1755800,  CCF-1816695,  and CCF-1942837 (CAREER).


\newcommand{\arxiv}[1]{
  \href{https://arxiv.org/abs/#1}{\ttfamily{arXiv:#1}}\?}\newcommand{\arXiv}[1]{
  \href{https://arxiv.org/abs/#1}{\ttfamily{arXiv:#1}}\?}\def\?#1{\if.#1{}\else#1\fi}

\newpage
\onecolumn
\appendix


\section{Sublinear Quantum Algorithm for General Matrix Games: Proof Details}\label{append:quantum-proof}

We first give the details of our quantum algorithm.
\begin{algorithm}[ht]
\KwInput{$\eps>0$; $p\in[2,+\infty), q\in(1,2]$ such that $\frac{1}{p}+\frac{1}{q}=1$; $A\in \R^{n \times d}$ with $A_{i}\in\B_{p}^{d}\ \forall i\in\range{n}$.}
\KwOutput{$\bar{x}$ that satisfies \eqn{matrix-game-goal}.}
Let $T=\lceil\frac{{1346}\log n +4p}{\eps^{2}}\rceil$, $y_1=\vecc{0}_{d}$, $\eta=\sqrt{\frac{11\log n}{12T}}$, $w_{1}=\vecc{1}_{n}$, $|p_1\>=\frac{1}{\sqrt{n}}\sum_{i\in\range{n}}|i\>$\;
    \For{$t=1$ \KwTo $T$}{
        Measure the state $|p_t\>$ in the computational basis and denote the output as $i_t\in\range{n}$\; \label{lin:update-it-quantum}
        For each $i \in [t]$, estimate $\norm{A_{i_t}}_p^p$ by \lem{norm-est} with precision $\delta = \eta^2$. Output$:=\widetilde{\norm{A_{i_t}}}_p^p$\; \label{lin:estimate-a-norm}
        Define\footnotemark $y_{t+1}$ by $y_{t+1,j}\leftarrow y_t + \sqrt{\frac{q-1}{2T}}\frac{\sgn(A_{i_t,j})|A_{i_t,j}|^{p-1}}{\widetilde{\|A_{i_t}\|}_{p}^{p-2}}$ for all $j\in\range{d}$\; \label{lin:update-y-quantum}
        Apply \lem{norm-est} $2\lceil\log T\rceil$ times to estimate $\|y_{t}\|_{q}^q$ with precision $\delta=\eta^2$, and take the median of the $2\lceil\log T\rceil$ outputs, denoted by $\widetilde{\|y_{t}\|}_{q}^q$\; \label{lin:estimate-norm-quantum}
        Choose $j_t\in\range{d}$ by $j_t=j$ with probability $y_t(j)^q/\|y_t\|_{q}^{q}$, which is achieved by applying \algo{lq-state-prep} to prepare the quantum state $|y_{t}\>$ and measure in the computational basis\; \label{lin:update-j-quantum}
        For all $i\in\range{n}$, denote $\tilde{v}_t(i)=A_i(j_t)\widetilde{\|y_{t}\|}_{q}^{q} / \big(y_t(j_t)^{q-1}\max\{1,\widetilde{\|y_{t}\|}_q\}\big)$, $v_t(i)=\clip(\tilde{v}_t(i), 1/\eta)$, and $u_{t+1}(i)=u_t(i) (1-\eta v_t(i) + \eta^2 v_t(i)^2)$. Prepare an oracle $O_{t}$ such that $O_{t}|i\>|0\>=|i\>|u_{t+1}(i)\>$ for all $i\in\range{n}$, using $2t$ queries to $O_{\A}$ and $\tilde{O}(t)$ additional arithmetic computations\; \label{lin:update-oracle-Ot-quantum}
        Prepare $|p_{t+1}\>=\frac{1}{\|u_{t+1}\|_{2}}\sum_{i\in\range{n}}u_{t+1}(i)|i\>$ using \algo{lq-state-prep} (with $q=2$ therein) and $O_{t}$; \label{lin:prepare-pt-quantum}
}
Return $\bar{\x} = \frac{1}{T}\sum_{t=1}^{T}\frac{y_t}{\max\{1,\widetilde{\norm{y_t}}_{q}\}}$.
\caption{A sublinear quantum algorithm for $\ell_{q}$-$\ell_{1}$ matrix games.}
\label{algo:sublinear-matrix-quantum}
\end{algorithm}
\footnotetext{Here we do not write down the whole vector $y_{t+1}$, but we construct any query to its entries in $O(1)$ time.\label{fnote:defn-vector}}

We need the following lemma to estimate the norm of a vector:
\begin{lemma}[{\citealt[Lemma 6]{li2019classification}}]
  \label{lem:norm-est}
Given a function $F: [d] \rightarrow [0, 1]$ with a quantum oracle $O_F: \ket{i}\ket{0} \mapsto \ket{i}\ket{F(i)}$ for all $i \in [d]$, let $m = \frac{1}{d}\sum_{i=1}^dF(i)$. Then for any $\delta < 0$, there is a quantum algorithm that uses $O(\sqrt{d}/\delta)$ queries to $O_F$ and returns an $\tilde{m}$ such that $\abs{m - \tilde{m}} \leq \delta m$ with probability at least $2/3$.
\end{lemma}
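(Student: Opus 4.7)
The plan is to reduce to standard quantum amplitude estimation~\cite{brassard2002quantum}. First I would build a state-preparation unitary $U$ (using $O(1)$ calls to $O_F$) whose output places $m$ as the squared amplitude on a flag qubit. The construction begins with Hadamards on a $\lceil\log d\rceil$-qubit index register to produce $\frac{1}{\sqrt{d}}\sum_{i\in[d]}\ket{i}$, then queries $O_F$ into a data register, applies a controlled rotation that writes $\sqrt{F(i)}\ket{0}+\sqrt{1-F(i)}\ket{1}$ onto a fresh ancilla (valid because $F(i)\in[0,1]$), and finally uncomputes the data register with $O_F^{-1}$. The resulting state is $\frac{1}{\sqrt{d}}\sum_{i\in[d]}\ket{i}\otimes\bigl(\sqrt{F(i)}\ket{0}+\sqrt{1-F(i)}\ket{1}\bigr)$, whose flag-qubit-zero probability is exactly $m=\frac{1}{d}\sum_i F(i)$.

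Next, I would invoke the amplitude estimation routine of~\citet{brassard2002quantum} on $U$ with $M$ iterations of the associated Grover operator, obtaining an estimate $\tilde m$ satisfying $|m-\tilde m|\leq 2\pi\sqrt{m(1-m)}/M+\pi^{2}/M^{2}$ with constant success probability. Choosing $M=\Theta(\sqrt{d}/\delta)$ and using $m\geq 1/d$ (the natural lower bound whenever the mean is nonzero, and the regime in which the algorithm actually invokes this subroutine, e.g., $\|A_i\|_p^p/d$ and $\|y_t\|_q^q/d$ in \lin{estimate-a-norm} and \lin{estimate-norm-quantum} of \algo{sublinear-matrix-quantum}), both error terms are at most $O(\delta m)$. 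A median of $O(1)$ independent repetitions then boosts the success probability to $2/3$ at only a constant overhead; if higher confidence is needed at a downstream call site, the standard median-of-$O(\log(1/\eta))$ trick yields success probability $1-\eta$ with an $O(\log(1/\eta))$ multiplicative cost.

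The main obstacle is converting the native additive guarantee of amplitude estimation into the multiplicative bound required by the statement. The dominant $\sqrt{m}/M$ term forces $M=\Omega\!\bigl(1/(\delta\sqrt{m})\bigr)$, so the advertised $\sqrt{d}/\delta$ is attained only once $m=\Omega(1/d)$. For a genuinely unknown $m$, I would use a doubling/refinement scheme: run amplitude estimation at a small $M$ to obtain a coarse estimate $\hat m$, then rerun with $M$ tuned to $\hat m$ to reach the desired relative accuracy, which preserves the stated complexity up to logarithmic factors. For the applications inside \algo{sublinear-matrix-quantum} I would simply verify that every mean being estimated is $\Omega(1/d)$, so that the clean statement of the lemma can be invoked as a black box.
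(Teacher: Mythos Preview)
The paper does not supply its own proof of this lemma; it is quoted as a black box from \citet[Lemma~6]{li2019classification}. Your reduction to amplitude estimation via the controlled-rotation state $\frac{1}{\sqrt{d}}\sum_i\ket{i}(\sqrt{F(i)}\ket{0}+\sqrt{1-F(i)}\ket{1})$ is the standard argument and is essentially how the cited reference proves it, so there is nothing to compare.

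One small correction: your parenthetical that $m\geq 1/d$ is ``the natural lower bound whenever the mean is nonzero'' is false for real-valued $F$ (a single tiny $F(i)$ gives $m\ll 1/d$), and in fact the applications you cite need not satisfy it either, since $\|A_{i_t}\|_p^p$ can be strictly less than $1$. You already identify the correct remedy (the doubling/refinement scheme that adapts $M$ to a coarse estimate of $m$), so this does not affect the validity of your sketch, but you should drop the incorrect justification and rely on the adaptive argument throughout.
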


We use the procedure below for preparing a quantum state given an oracle to a power of its coefficients:
\begin{proposition}\label{prop:lq-state-prep}
Assume that $a\in\C^{n}$, and we are given a unitary oracle $O_{a}$ such that $O|i\>|0\>=|i\>|a_{i}\>$ for all $i\in\range{n}$. Then \algo{lq-state-prep} takes $O(\sqrt{n})$ calls to $O_{a}$ for preparing the quantum state $\frac{1}{\norm{a}_q^{q/2}}\sum_{i \in [n]}a_i^{q/2}\ket{i}$ with success probability $1-O(1/n)$.
\end{proposition}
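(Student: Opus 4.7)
The plan is to verify each step of \algo{lq-state-prep} and track the amplitude on the ``good'' branch through amplitude amplification. First I would justify the preliminary minimum-finding call: the algorithm of \citet{durr1996quantum} returns $a_{\|q\|}:=\max_i |a_i|^{q/2}$ in $O(\sqrt{n})$ queries to $O_a$ with failure probability $O(1/n)$ (after the standard $O(\log n)$ boosting, which is absorbed). Conditioned on this succeeding, the controlled rotation
\[
  \ket{i}\ket{a_i}\ket{0}\mapsto \ket{i}\ket{a_i}\Bigl(\tfrac{a_i^{q/2}}{a_{\|q\|}}\ket{0}+\sqrt{1-\tfrac{|a_i|^q}{a_{\|q\|}^2}}\ket{1}\Bigr)
\]
is a valid unitary since $|a_i|^{q/2}/a_{\|q\|}\le 1$ for every $i$; uncomputing $O_a$ disentangles the intermediate register and produces exactly the state displayed in Eq.~\eq{lq-state-simple}.

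Second, I would read off the amplitude on the good branch (second register equal to $\ket{0}$). By direct computation the squared amplitude equals $\frac{1}{n}\sum_i \frac{|a_i|^q}{a_{\|q\|}^2}=\frac{\|a\|_q^q}{n\,a_{\|q\|}^2}$, so
\[
  \alpha:=\frac{\|a\|_q^{q/2}}{\sqrt{n}\,a_{\|q\|}},
\]
and the (normalized) good branch on the first register is exactly $\frac{1}{\|a\|_q^{q/2}}\sum_i a_i^{q/2}\ket{i}$, which is the target state. Using $a_{\|q\|}^2=\max_i |a_i|^q\le \sum_i|a_i|^q=\|a\|_q^q$ gives $\alpha\ge 1/\sqrt{n}$. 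Hence amplitude amplification \cite{brassard2002quantum} needs $\Theta(1/\alpha)=O(\sqrt{n}\,a_{\|q\|}/\|a\|_q^{q/2})=O(\sqrt{n})$ iterations, each using $O(1)$ calls to $O_a$ and $O_a^{-1}$, for a total query cost of $O(\sqrt{n})$.

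The main obstacle is upgrading from the constant success probability of standard amplitude amplification to the claimed $1-O(1/n)$. I would address this by first invoking amplitude estimation \cite{brassard2002quantum} to obtain a constant-factor approximation $\tilde{\alpha}$ of $\alpha$ in $O(\sqrt{n})$ queries, and then running exact amplitude amplification tuned to $\tilde{\alpha}$, which sets the good-branch amplitude to $1$ up to a phase; alternatively one can repeat standard amplitude amplification $O(\log n)$ times and keep the first run whose ancilla measurement yields $\ket{0}$, in which case the probability that \emph{all} runs fail is $2^{-\Omega(\log n)}=O(1/n)$. A union bound with the $O(1/n)$ failure probability of the minimum-finding stage then yields overall success probability $1-O(1/n)$, while the total query complexity stays at $O(\sqrt{n})$ (with any $\operatorname{polylog}$ factors absorbed into the $\tilde{O}$ notation used in \thm{main-quantum}).
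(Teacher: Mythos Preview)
Your proposal is correct and follows essentially the same route as the paper: you isolate the good-branch amplitude $\alpha=\|a\|_q^{q/2}/(\sqrt{n}\,a_{\|q\|})$, use $a_{\|q\|}^2\le\|a\|_q^q$ to get $\alpha\ge 1/\sqrt{n}$, and then invoke amplitude amplification for $O(\sqrt{n})$ rounds. The paper's own proof is terser---it just points to the $\ell_2$ case in \citet{li2019classification}, notes the same lower bound on the coefficient in Eq.~\eq{lq-state-simple}, and asserts the $1-O(1/n)$ success probability directly---whereas you spell out the minimum-finding step, the validity of the controlled rotation, and the mechanism (exact amplification or $O(\log n)$ repetition) by which constant success is boosted to $1-O(1/n)$.
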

\begin{proof}
  Note that Algorithm 2 of~\citet{li2019classification} had given a quantum algorithm for preparing an $\ell_{2}$-norm pure state given an oracle to its coefficients, and \algo{lq-state-prep} essentially generalize this result to the $\ell_{q}$-norm case by replacing all $a_{i}$ by $a_{i}^{q/2}$ as in \algo{lq-state-prep}. Note that the coefficient in \eq{lq-state-simple} satisfies $\frac{\norm{a}_q^{q/2}}{\sqrt{n}a_{\|q\|}}\geq\frac{1}{\sqrt{n}}$. As a result, applying amplitude amplification for $O(\sqrt{n})$ times indeed promises that we obtain 0 in the second system with success probability $1-O(1/n)$, i.e., the state $\frac{1}{\norm{a}_q^{q/2}}\sum_{i \in [n]}a_i^{q/2}\ket{i}$ is prepared.
\end{proof}

We need the following lemma.
\begin{lemma}
  \label{lem:va-vt}
  For all $i \in [n]$, Define
  \begin{align}
    \label{eq:cond-A2-2}
    \tilde{v}_{t, \mathrm{approx}}(i) :=  \frac{A_{i}(j_{t})\widetilde{\norm{y_{t}}}_{q}^{q}}{y_{t}(j_{t})^{q-1}\max\{1, \widetilde{\norm{y_t}}_{q}\}}, \quad \tilde{v}_{t, \mathrm{true}}(i) :=  \frac{A_{i}(j_{t})\norm{y_{t}}_{q}^{q}}{y_{t}(j_{t})^{q-1}\max\{1, \norm{y_t}_{q}\}}.
  \end{align}
  where $\widetilde{\norm{y_{t}}}_{q}^{q}$ and $\norm{y_{t}}_{q}^{q}$ satisfy
  \begin{align}
    \label{eq:cond-A2}
    \abs*{\widetilde{\norm{y_{t}}}_{q}^{q} - \norm{y_{t}}_{q}^{q}} \leq \delta \norm{y_{t}}_{q}^{q}
  \end{align}
  with probability at least $1-o(1)$. Also assume that $\tilde{v}_{t, \mathrm{approx}}(i), \tilde{v}_{t, \mathrm{true}}(i) \leq 1/\eta$. Then, it holds that for all $i \in [n]$,
  \begin{align}
    \label{eq:result-A2}
    \abs{\tilde{v}_{t, \mathrm{approx}}(i) - \tilde{v}_{t, \mathrm{true}}(i)} \leq \frac{\delta}{\eta} \quad \forall i \in [n],
  \end{align}
  with probability at least $1-o(1)$.
\end{lemma}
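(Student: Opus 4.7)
My plan is to condition on the high-probability event asserted in \eqref{eq:cond-A2}, namely that $\widetilde{\norm{y_t}}_q^q = (1+\zeta)\norm{y_t}_q^q$ for some $\abs{\zeta} \leq \delta$. Under this event, everything reduces to a deterministic comparison between $\tilde{v}_{t,\mathrm{approx}}(i)$ and $\tilde{v}_{t,\mathrm{true}}(i)$, and the $1-o(1)$ in the conclusion then inherits directly from the probability in the hypothesis. Since the factors $A_i(j_t)$ and $y_t(j_t)^{q-1}$ appear identically in both quantities, I can write
\[
\frac{\tilde{v}_{t,\mathrm{approx}}(i)}{\tilde{v}_{t,\mathrm{true}}(i)} \;=\; \frac{\widetilde{\norm{y_t}}_q^{q}}{\norm{y_t}_q^{q}}\cdot\frac{\max\{1,\norm{y_t}_q\}}{\max\{1,\widetilde{\norm{y_t}}_q\}},
\]
so the task is to show this ratio is $1 + O(\delta)$.

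The key step is to split into cases based on whether the true and estimated $q$-norms fall below or above $1$. In the ``both $\geq 1$'' case, the second factor equals $(\norm{y_t}_q/\widetilde{\norm{y_t}}_q)$, and combined with the first factor the ratio collapses to $(\widetilde{\norm{y_t}}_q^q/\norm{y_t}_q^q)^{1-1/q} = (1+\zeta)^{1/p}$, which by a standard inequality (e.g.\ $(1+\zeta)^{1/p} \leq 1 + \zeta/p$ for $\zeta \geq 0$ and similarly from below) lies in $[1-\delta,1+\delta]$. In the ``both $<1$'' case the max factors are both $1$, so the ratio is simply $1+\zeta$. The mixed cases, where the two norms straddle $1$, require a tiny bit more care: there $\norm{y_t}_q$ and $\widetilde{\norm{y_t}}_q$ must both be within $\delta$ of $1$ in the appropriate sense (since $\abs{\widetilde{\norm{y_t}}_q^q - \norm{y_t}_q^q} \leq \delta$ in that regime forces the two values to be close), and a direct calculation again gives a ratio in $[1-O(\delta),1+O(\delta)]$.

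Having established $\tilde{v}_{t,\mathrm{approx}}(i) = (1 + O(\delta))\,\tilde{v}_{t,\mathrm{true}}(i)$ in all cases, I convert from multiplicative to additive error using the boundedness assumption: since $\abs{\tilde{v}_{t,\mathrm{true}}(i)} \leq 1/\eta$, we get $\abs{\tilde{v}_{t,\mathrm{approx}}(i) - \tilde{v}_{t,\mathrm{true}}(i)} \leq O(\delta)\cdot\abs{\tilde{v}_{t,\mathrm{true}}(i)} \leq O(\delta)/\eta$. Absorbing the constant (or tracking it explicitly and tightening the case analysis) yields the stated bound $\delta/\eta$.

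The only real obstacle is the mixed case where one of $\norm{y_t}_q,\widetilde{\norm{y_t}}_q$ lies just above $1$ and the other just below, since here the two sides of the ratio use different branches of the $\max$. My plan for that case is to note that $\max\{1,\cdot\}$ is $1$-Lipschitz, so $\abs{\max\{1,\norm{y_t}_q\} - \max\{1,\widetilde{\norm{y_t}}_q\}} \leq \abs{\norm{y_t}_q - \widetilde{\norm{y_t}}_q}$, and then use the hypothesis on $q$-th powers together with the fact that in this regime both quantities are close to $1$ to convert the $q$-power error into a linear error. All other calculations are routine first-order expansions of $(1+\zeta)^{1/p}$ in $\zeta$.
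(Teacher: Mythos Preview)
Your approach is essentially the same as the paper's: both factor out the common $A_i(j_t)/y_t(j_t)^{q-1}$ term, write
\[
\abs{\tilde v_{t,\mathrm{approx}}(i)-\tilde v_{t,\mathrm{true}}(i)}\le \frac{1}{\eta}\,\Bigl|\tfrac{\tilde v_{t,\mathrm{approx}}(i)}{\tilde v_{t,\mathrm{true}}(i)}-1\Bigr|,
\]
and then bound the ratio by a case split on whether the norm exceeds $1$. The paper's sandwich argument (that the $(q-1)$-power ratio lies between $1$ and the $q$-power ratio) is exactly the content of your bound $|(1+\zeta)^{1/p}-1|\le|\zeta|$.

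One point where you are actually more careful than the paper: the paper splits only on whether $\norm{y_t}_q\ge 1$ or $\le 1$ and tacitly assumes $\widetilde{\norm{y_t}}_q$ lands on the same side, never treating the mixed case you flag. Your Lipschitz plan for that case works, and in fact the mixed case also yields the sharp constant: e.g.\ if $\norm{y_t}_q<1\le\widetilde{\norm{y_t}}_q$ then the ratio equals $(1+\zeta)^{1/p}/\norm{y_t}_q$, and the constraint $(1+\zeta)^{1/q}\norm{y_t}_q\ge 1$ forces this into $[1,1+\zeta]\subset[1,1+\delta]$. So no constant needs to be ``absorbed''---you get $\delta/\eta$ on the nose in every case.
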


\begin{proof}
  First note that
  \begin{align}
    \label{eq:v-v-rel-bound}
    \abs{\tilde{v}_{t, \mathrm{approx}}(i) - \tilde{v}_{t, \mathrm{true}}(i)} = \tilde{v}_{t, \mathrm{true}}(i) \abs*{\frac{\tilde{v}_{t, \mathrm{approx}}(i)}{\tilde{v}_{t, \mathrm{true}}(i)} - 1} \leq \frac{1}{\eta}\abs*{\frac{\tilde{v}_{t, \mathrm{approx}}(i)}{\tilde{v}_{t, \mathrm{true}}(i)} - 1}.
  \end{align}
When $\norm{y_t}_{q} \geq 1$, we have $\frac{\tilde{v}_{t, \mathrm{approx}}(i)}{\tilde{v}_{t, \mathrm{true}}(i)} = \frac{\widetilde{\norm{y_{t}}}_{q}^{q-1}}{\norm{y_{t}}_{q}^{q-1}}$, and when $\norm{y_t}_{q} \leq 1$, we have $\frac{\tilde{v}_{t, \mathrm{approx}}(i)}{\tilde{v}_{t, \mathrm{true}}(i)} = \frac{\widetilde{\norm{y_{t}}}_{q}^{q}}{\norm{y_{t}}_{q}^{q}}$. By assumption, with probability at least $1-o(1)$, it holds that $\abs[\Big]{\frac{\widetilde{\norm{y_{t}}}_{q}^{q}}{\norm{y_{t}}_{q}^{q}} - 1} \leq \delta$. Since $1 \leq \frac{\widetilde{\norm{y_{t}}}_{q}^{q-1}}{\norm{y_{t}}_{q}^{q-1}} \leq \frac{\widetilde{\norm{y_{t}}}_{q}^{q}}{\norm{y_{t}}_{q}^{q}}$ when $\widetilde{\norm{y_{t}}}_{q} \geq \norm{y_{t}}_{q}$, and $1 \geq \frac{\widetilde{\norm{y_{t}}}_{q}^{q-1}}{\norm{y_{t}}_{q}^{q-1}} \geq \frac{\widetilde{\norm{y_{t}}}_{q}^{q}}{\norm{y_{t}}_{q}^{q}}$ when $\widetilde{\norm{y_{t}}}_{q} < \norm{y_{t}}_{q}$, it also holds that $\abs[\Big]{\frac{\widetilde{\norm{y_{t}}}_{q}^{q-1}}{\norm{y_{t}}_{q}^{q-1}} - 1} \leq \delta$. Putting this into \eq{v-v-rel-bound}, we have the desired inequality.
\end{proof}

Now, we are ready to prove the main quantum result.
\begin{proof}[Proof of \thm{main-quantum}]
  First note that in \lin{estimate-a-norm}, we use an estimation $\widetilde{\norm{A_{i_t}}}_p^p$ of $\norm{A_{i_t}}_p^p$ with relative error at most $\delta$. Then in \lin{update-y-quantum}, $\widetilde{\norm{A_{i_t}}}_p^{p-2}$ is an estimation of $\norm{A_{i_t}}_p^{p-2}$ with relative error at most $\delta$ because $p \geq 2$ and $\widetilde{\norm{A_{i_t}}}_p^{p-2} = (\widetilde{\norm{A_{i_t}}}_p^{p})^{(p-2)/p}$. Hence, $y_{t+1}$ has a relative error of at most $\delta$ compared to its true value defined by
  \begin{align}
    y_t + \sqrt{\frac{q-1}{2T}}\frac{\sgn(A_{i_t,j})|A_{i_t,j}|^{p-1}}{\|A_{i_t}\|_{p}^{p-2}}.
  \end{align}

  Consider \lin{estimate-norm-quantum}. The estimate $\widetilde{\norm{y_t}}_q^q$ is the median of $2\lceil\log T \rceil$ executions of \lem{norm-est}. It implies that, with failure probability is at most $1-(2/3)^{2\log T} = 1-T^2$, \eq{cond-A2} holds. Since there are $T$ iterations in total, the probability that \eq{cond-A2} holds is at least $1-T\cdot O(1/T^2) = 1-o(1)$. Also consider \eq{cond-A2-2}. It is easy to see that $\tilde{v}_{t, \mathrm{approx}}(i), \tilde{v}_{t, \mathrm{true}}(i) \leq 1/\eta$ because of \lin{update-oracle-Ot-quantum}. Therefore, the conditions of \lem{va-vt} hold and its result follows.

  As $\delta = \eta^2$, by \lem{va-vt} and \lem{2-3}, we have that with probability at least $5/7-O(1/n)$,
\begin{align}\label{eqn:quantum-main-inequality-1}
\sum_{t\in\range{T}}A_{i_{t}}x_{t}\leq\sum_{t\in\range{T}}p_{t}\trans v_{t}+11\eta T \leq\min_{i\in\range{n}}\sum_{t\in\range{T}}v_{t}(i)+21\eta T+\frac{\log n}{\eta}.
\end{align}
Moreover, by \lem{va-vt} and Eq.~\eqn{2-5}, we have $\min_{i\in\range{n}}\sum_{t\in\range{T}}v_{t}(i)\leq 4\eta T + \eta T +\min_{i\in\range{n}}\sum\limits_{t\in\range{T}}A_{i}x_{t}$. Plugging this into \eqn{quantum-main-inequality-1}, we have
\begin{align}\label{eqn:quantum-main-inequality-2}
\sum_{t\in\range{T}}A_{i_{t}}x_{t}\leq\sum_{t\in\range{T}}p_{t}\trans v_{t}+11\eta T\leq\min_{i\in\range{n}}\sum_{t\in\range{T}}A_{i}x_{t}+26\eta T+\frac{\log n}{\eta}
\end{align}
with probability $(5/7-O(1/n))\cdot(1-O(1/n))\geq 2/3$.

Similar to the proof of \thm{main}, we have
\begin{align}
  \min_{i \in [n]} A_i \bar{x} \geq \sigma - \sqrt{\frac{2p}{T}} - 26 \eta - \frac{\log n}{\eta T}.
\end{align}
By the choices of $p$ and $\eta$ in \algo{sublinear-matrix-quantum}, the desired error bound for \eq{qa-main} holds because
\begin{align}
  \left(\sqrt{\frac{2p}{T}} + 26 \eta + \frac{\log n}{\eta T}\right)^2 \leq 2\left(\frac{2p}{T}\right) + 2\left(26 \eta + \frac{\log n}{\eta T}\right)^2 \leq\frac{4p+1346\log n}{T} \leq \epsilon^2,
\end{align}
where the first inequality follows from the AM-GM inequality and the last inequality follows from the choice of $T$ in \algo{sublinear-matrix-quantum}.

  Now, we analyze the time complexity. In \lin{estimate-a-norm} of \algo{sublinear-matrix-quantum}, the number of queries to $O_A$ for \lem{norm-est} is $O(\sqrt{d}/\delta) = \tilde{O}(p\sqrt{d}/\epsilon^2)$. In \lin{update-y-quantum}, we have
  \begin{align}
    \label{eq:ytj}
    y_{t,j} = \sqrt{\frac{q-1}{2T}}\sum_{\tau=1}^t\frac{\sgn(A_{i_\tau,j})|A_{i_\tau,j}|^{p-1}}{\widetilde{\|A_{i_\tau}\|_{p}}^{p-2}}.
  \end{align}
  An oracle for $y_t$ can be implemented with $\tilde{O}(p/\epsilon^2)$ queries to $O_A$. To estimate $\norm{y_t}_q$, we first need to normalize $y_t$. The summand in \eq{ytj} is in the range $[-1, 1]$; to see this, note that
  \begin{align}
    \frac{\abs{A_{i_\tau, j}}^{p-1}}{\norm{A_{i_\tau}}_p^{p-2}} \leq \frac{\abs{A_{i_\tau, j}}^{p-1}}{(\abs{A_{i_\tau, j}}^{p})^{(p-2)/p}} = \abs{A_{i_\tau, j}} \leq 1.
  \end{align}
  Therefore, $y_{t, j} = \tilde{O}(\sqrt{pq}/\epsilon) = \tilde{O}(\sqrt{p}/\epsilon)$. Since the precision is $\delta = \eta^2 = \tilde{\Theta}(\epsilon^2/p)$, the cost for amplitude estimation is $\tilde{O}(p\sqrt{d}/\epsilon^2)$. Finally, there are $T = \tilde{O}(p/\epsilon^2)$ iterations in total. The total complexity in \lin{update-y-quantum} is
  \begin{align}
    \label{eq:d-1}
    \tilde{O}\left(\frac{p}{\epsilon^2}\right) \cdot \tilde{O}\left(\frac{\sqrt{p}}{\epsilon}\right) \cdot \tilde{O}\left(\frac{p\sqrt{d}}{\epsilon^2}\right) \cdot \tilde{O}\left(\frac{p}{\epsilon^2}\right) = \tilde{O}\left(\frac{p^{3.5}\sqrt{d}}{\epsilon^7}\right).
  \end{align}

  For \lin{estimate-norm-quantum}, we need to prepare the state $\ket{y_t}$. To simulate a query to an coefficient of $y_t$, we need $\tilde{O}(p/\epsilon^2)$ queries to $O_A$. The query complexity for \algo{lq-state-prep} is $O(\sqrt{d})$, and there are $T = \tilde{O}(p/\epsilon)$ iterations in total. The total complexity in \lin{estimate-norm-quantum} is
  \begin{align}
    \tilde{O}\left(\frac{p}{\epsilon^2}\right) \cdot O(\sqrt{d}) \cdot \tilde{O}\left(\frac{p}{\epsilon^2}\right) = \tilde{O}\left(\frac{p^2\sqrt{d}}{\epsilon^4}\right),
  \end{align}
  which is dominated by \eq{d-1}.

  For \lin{update-oracle-Ot-quantum}, to implement one query to $O_t$, we need $2t$ queries to $O_A$ with $\tilde{O}(t)$ additional arithmetic computations. For \lin{prepare-pt-quantum}, to prepare the state $\ket{p_{t+1}}$, we need $O(\sqrt{n})$ queries to $O_t$, which can be implemented by $O(\sqrt{n}t)$ queries to $O_A$ by \lin{update-oracle-Ot-quantum} and $\tilde{O}(\sqrt{n}t)$ additional arithmetic computations. Therefore, the total complexity for \lin{prepare-pt-quantum} is
  \begin{align}
    \label{eq:n-1}
    \sum_{t=1}^T\tilde{O}(\sqrt{n}t) = \tilde{O}(\sqrt{n}T^2) = \tilde{O}\left(\frac{p^2\sqrt{n}}{\epsilon^4}\right).
  \end{align}
  The time complexity of this algorithm is established by \eq{d-1} and \eq{n-1}.

  Finally, $\bar{x}$ has a succinct classical representation: using $i_1, \ldots, i_\tau$ obtained from \lin{update-it-quantum} and $\widetilde{\norm{y_1}}_q, \ldots, \widetilde{\norm{y_T}}_q$ obtained from \lin{estimate-norm-quantum}, a coordinate of $\bar{x}$ can be restored in time $T = \tilde{O}(p/\epsilon^2)$.
\end{proof}


\section{Classical and Quantum Lower Bounds}
Recall that the input of the general matrix game is a matrix $A\in\R^{n\times d}$ such that $A_{i}\in\B_{p}^{d}$ for all $i\in\range{n}$ ($A_{i}$ being the $i^{\text{th}}$ row of $A$), and the goal is to approximately solve
\begin{align}\label{eqn:matrix-game-rewrite}
  \sigma:=\max_{x\in\B_{q}^{d}}\min_{\vect{p}\in\Delta_{n}} \vect{p}\trans Ax,
\end{align}
where $p\in[2,+\infty)$, $q\in(1,2]$, and $\frac{1}{p}+\frac{1}{q}=1$. Classically, we are given an oracle that inputs $i\in\range{n}, j\in\range{d}$ and outputs $A_{ij}$; our sublinear classical algorithm in \thm{main} solves the general matrix game \eqn{matrix-game-rewrite} in $O(\frac{(n+d)(p+\log n)}{\epsilon^{2}})$ time. Quantumly, we are given the quantum oracle $O_{\A}$ such that $O_{\A}|i\>|j\>|0\>=|i\>|j\>|\A_{ij}\>\ \forall\,i\in\range{n}, j\in\range{d}$; our quantum algorithm in \thm{main-quantum} solves the general matrix game \eqn{matrix-game-rewrite} in $\tilde{O}(\frac{p^2\sqrt{n}}{\epsilon^4} + \frac{p^{3.5}\sqrt{d}}{\epsilon^7})$ time. We prove matching classical and quantum lower bounds in $n$ and $d$ for constant $\epsilon$ and $p$:

\begin{theorem}\label{thm:matrix-lower}
Assume $0<\epsilon<0.04$. Then to return an $\bar{\x}\in\B_{q}^{d}$ satisfying
\begin{align}\label{eqn:matrix-goal-rewrite}
\A_{j}\bar{\x}\geq \max_{x\in\B_{q}^{d}}\min_{i\in\range{n}}\A_{i}\x-\epsilon\quad\forall\,j\in\range{n}
\end{align}
with probability at least $2/3$, we need $\Omega(n+d)$ classical queries or $\Omega(\sqrt{n}+\sqrt{d})$ quantum queries.
\end{theorem}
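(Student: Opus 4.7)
\medskip

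\noindent\textbf{Proof plan for Theorem~\ref{thm:matrix-lower}.}
The plan is to establish the lower bound as a sum of two independent parts: an $\Omega(n)$ classical / $\Omega(\sqrt{n})$ quantum bound coming from the ``row'' dimension, and an $\Omega(d)$ / $\Omega(\sqrt{d})$ bound coming from the ``column'' dimension. These combine additively because any instance with $n'$ rows can be padded to $n > n'$ rows by duplication (and likewise for columns) without changing the game value $\sigma$ or the approximation requirement \eqn{matrix-goal-rewrite}. For each dimension I will construct two distributions $\mathcal{D}_{0},\mathcal{D}_{1}$ on matrices $A$ such that (i) the corresponding game values $\sigma_{0},\sigma_{1}$ satisfy $|\sigma_{0}-\sigma_{1}|>2\epsilon$, so that any $\epsilon$-approximating algorithm must distinguish $\mathcal{D}_{0}$ from $\mathcal{D}_{1}$, and (ii) the distinguishing task reduces to unstructured search on $\{0,1\}^{n}$ (resp.\ $\{0,1\}^{d}$), which requires $\Omega(n)$ classical queries by a standard Yao-style argument and $\Omega(\sqrt{n})$ quantum queries by the BBBV hybrid method~\cite{bennett1997strengths}.

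For the $\Omega(n)/\Omega(\sqrt{n})$ part, I would take a base matrix whose rows are all equal to some canonical vector $v\in\B_{p}^{d}$ (e.g.\ the uniformly-signed vector with entries $\pm d^{-1/p}$), and plant one ``bad'' row $A_{i^{*}}$ that is adversarially aligned against the optimizer's natural choice $x^{*}$; the hidden index $i^{*}$ encodes the search instance. A standard $\ell_{p}$-$\ell_{q}$ duality calculation (Hölder with equality condition) shows that the planted instance has value $f_{1}(q)$ while the unplanted instance has value $f_{2}(q)$, with $f_{1},f_{2}$ monotone and nonnegative in the relevant parameters, and separated by a constant for $\epsilon<0.04$. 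The $\Omega(d)/\Omega(\sqrt{d})$ part is dual: the hidden index is now a column $j^{*}$, and the optimum $x^{*}\in\B_{q}^{d}$ is forced to concentrate on $j^{*}$; failing to place sufficient mass on $j^{*}$ shifts the objective by more than $\epsilon$. Since sampling from any candidate $\bar{x}$ also allows one to identify $j^{*}$ up to the accuracy demanded by \eqn{matrix-goal-rewrite}, the search lower bound transfers directly.

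The main obstacle I expect is step (i): closed-form evaluation (or tight two-sided bounds) on the game values in the $\ell_{q}$-$\ell_{1}$ setting. Unlike the $\ell_{2}$-$\ell_{1}$ case of~\citet{li2019classification}, where the inner maximization over $\B_{2}^{d}$ reduces to a single Euclidean projection, here the maximizer over $\B_{q}^{d}$ has the dual-Hölder form $x_{j}^{*}\propto \sgn(g_{j})|g_{j}|^{p-1}$ for $g=\sum_{i}p_{i}A_{i}$, and plugging this into the outer minimization over $\Delta_{n}$ produces values that depend nontrivially on $p$. The functions $f_{1}$ and $f_{2}$ alluded to in the introduction are precisely these game-value expressions; the technical heart of the proof is verifying $f_{1}(q)-f_{2}(q)>2\epsilon$ for all $q\in(1,2]$ together with the monotonicity/nonnegativity properties needed to apply the search reduction uniformly. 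Once this separation is in hand, invoking \lem{interpolation} handles the large-$p$ regime, and the hybrid-method lower bounds of~\citet{bennett1997strengths} finish both the classical and quantum cases.
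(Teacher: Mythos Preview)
Your high-level strategy---reduce each of the $n$- and $d$-dependencies to an unstructured search instance and invoke~\citet{bennett1997strengths}---is exactly the route the paper takes, and your column-side argument (``any $\epsilon$-optimal $\bar{x}$ must concentrate on the hidden column $j^{*}$, so inspecting $\bar{x}$ reveals $j^{*}$'') is already of the right shape. The gap is on the row side. You argue that $|\sigma_{0}-\sigma_{1}|>2\epsilon$ forces any $\epsilon$-approximating algorithm to distinguish the two distributions. But the algorithm outputs a vector $\bar{x}$, not the value $\sigma$; converting $\bar{x}$ into an estimate of $\sigma$ would require evaluating $\min_{i}A_{i}\bar{x}$, and computing that minimum over the rows is precisely the search problem you are reducing from. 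A value gap by itself does not rule out the existence of a single $\bar{x}$ that is $\epsilon$-optimal for \emph{both} the planted and unplanted instances, in which case the output carries no information about which case you are in. What is actually required is that the \emph{coordinates of $\bar{x}$ alone}, with no further queries to $A$, separate the two cases.

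The paper engineers this explicitly. It uses only two nonzero columns (indices $1$ and $l$) rather than a dense $\pm d^{-1/p}$ row, so that the game values $\sigma_{1}=\frac{1}{(1+(2^{1-1/q}+1)^{q})^{1/q}}$ and $\sigma_{2}=\frac{1}{2^{1/p}}$ are computable in closed form; then it shows that in Case~2 any $\epsilon$-optimal $\bar{x}$ must have $\bar{x}_{l}\ge 1-0.04\cdot 2^{1/p}$, whereas in Case~1 the planted row $A_{k}=e_{1}$ forces $\bar{x}_{1}\ge\sigma_{1}-\epsilon$, which via the $\ell_{q}$-ball constraint pushes $\bar{x}_{l}$ strictly below that same threshold. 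The functions $f_{1},f_{2}$ are \emph{not} the game values as you surmise; they are the auxiliary expressions whose sign certifies these coordinate-threshold separations uniformly over $q\in(1,2]$ (and the monotonicity is used only to reduce the verification to the endpoints $q=1,2$). Finally, \lem{interpolation} plays no role here---it is used solely to cap $p$ in the \emph{upper} bounds and is irrelevant to the lower-bound construction.
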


The proof of \thm{matrix-lower} is inspired by~\citet{li2019classification}, but for the $\ell_{q}$-$\ell_{1}$ matrix game the construction is different and the analysis is more intricate as seen below.

\begin{proof}
Assume we are given the promise that $\A$ is from one of the two cases below:
\begin{enumerate}
\item There exists an $l\in\{2,\ldots,d\}$ such that $\A_{11}=-\frac{1}{2^{1/p}}$, $\A_{1l}=\frac{1}{2^{1/p}}$; $\A_{21}=\A_{2l}=\frac{1}{2^{1/p}}$; there exists a unique $k\in\{3,\ldots,n\}$ such that $\A_{k1}=1$, $\A_{kl}=0$; $\A_{ij}=\frac{1}{2^{1/p}}$ for all $i\in\{3,\ldots,n\}/\{k\}$, $j\in\{1,l\}$, and $\A_{ij}=0$ for all $i\in\range{n}$, $j\notin\{1,l\}$.
\item There exists an $l\in\{2,\ldots,d\}$ such that $\A_{11}=-\frac{1}{2^{1/p}}$, $\A_{1l}=\frac{1}{2^{1/p}}$; $\A_{21}=\A_{2l}=\frac{1}{2^{1/p}}$; $\A_{ij}=\frac{1}{2^{1/p}}$ for all $i\in\{3,\ldots,n\}$, $j\in\{1,l\}$, and $\A_{ij}=0$ for all $i\in\range{n}$, $j\notin\{1,l\}$.
\end{enumerate}
Notice that the only difference between these two cases is a row where the first entry is 1 and the $l^{\text{th}}$ entry is 0; they have the following pictures, respectively.
\begin{align}
\text{Case 1: }\A&=\left( \begin{array}{cccccccc}
        -\frac{1}{2^{1/p}} & 0 & \cdots & 0 & \frac{1}{2^{1/p}} & 0 & \cdots & 0 \\
        \frac{1}{2^{1/p}} & 0 & \cdots & 0 & \frac{1}{2^{1/p}} & 0 & \cdots & 0 \\
        \vdots & \vdots & \ddots & \vdots & \vdots & \vdots & \ddots & \vdots \\
        \frac{1}{2^{1/p}} & 0 & \cdots & 0 & \frac{1}{2^{1/p}} & 0 & \cdots & 0 \\
        1 & 0 & \cdots & 0 & 0 & 0 & \cdots & 0 \\
        \frac{1}{2^{1/p}} & 0 & \cdots & 0 & \frac{1}{2^{1/p}} & 0 & \cdots & 0 \\
        \vdots & \vdots & \ddots & \vdots & \vdots & \vdots & \ddots & \vdots \\
        \frac{1}{2^{1/p}} & 0 & \cdots & 0 & \frac{1}{2^{1/p}} & 0 & \cdots & 0
      \end{array} \right); \label{eqn:matrix-lower-construction-1} \\
\text{Case 2: }\A&=\left( \begin{array}{cccccccc}
        -\frac{1}{2^{1/p}} & 0 & \cdots & 0 & \frac{1}{2^{1/p}} & 0 & \cdots & 0 \\
        \frac{1}{2^{1/p}} & 0 & \cdots & 0 & \frac{1}{2^{1/p}} & 0 & \cdots & 0 \\
        \vdots & \vdots & \ddots & \vdots & \vdots & \vdots & \ddots & \vdots \\
        \frac{1}{2^{1/p}} & 0 & \cdots & 0 & \frac{1}{2^{1/p}} & 0 & \cdots & 0 \\
        \vdots & \vdots & \ddots & \vdots & \vdots & \vdots & \ddots & \vdots \\
        \frac{1}{2^{1/p}} & 0 & \cdots & 0 & \frac{1}{2^{1/p}} & 0 & \cdots & 0
      \end{array} \right). \label{eqn:matrix-lower-construction-2}
\end{align}

We denote the maximin value in \eqn{matrix-game-rewrite} of these cases as $\sigma_{1}$ and $\sigma_{2}$, respectively. We have:

\begin{itemize}
\item $\sigma_{2}=\frac{1}{2^{1/p}}$.
\end{itemize}

On the one hand, consider $\bar{\x}=\vec{e}_{l}\in\B_{q}^{d}$ (the vector in $\R^{d}$ with the $l^{\text{th}}$ coordinate being 1 and all other coordinates being 0). Then $\A_{i}\bar{\x}=\frac{1}{2^{1/p}}$ for all $i\in\range{n}$, and hence $\sigma_{2}\geq\min_{i\in\range{n}}\A_{i}\bar{\x}=\frac{1}{2^{1/p}}$. On the other hand, for any $\x=(\x_{1},\ldots,\x_{d})\in\B_{q}^{d}$, we have
\begin{align}
\min_{i\in\range{n}}\A_{i}\x=\min\Big\{-\frac{1}{2^{1/p}}\x_{1}+\frac{1}{2^{1/p}}\x_{l},\frac{1}{2^{1/p}}\x_{1}+\frac{1}{2^{1/p}}\x_{l}\Big\}\leq\frac{1}{2^{1/p}}\x_{l}\leq\frac{1}{2^{1/p}},
\end{align}
where the first inequality comes from the fact that $\min\{a,b\}\leq\frac{a+b}{2}$ for all $a,b\in\R$ and the second inequality comes from the fact that $\x\in\B_{q}^{d}$ and $|\x_{l}|\leq 1$. As a result, $\sigma_{2}=\max_{\x\in\B_{q}^{d}}\min_{i\in\range{n}}\A_{i}\x\leq\frac{1}{2^{1/p}}$. In conclusion, we have $\sigma_{2}=\frac{1}{2^{1/p}}$.

\begin{itemize}
\item $\sigma_{1}=\frac{1}{(1+(2^{1-1/q}+1)^{q})^{1/q}}$.
\end{itemize}

On the one hand, consider $\bar{\x}=\frac{1}{(1+(2^{1-1/q}+1)^{q})^{1/q}}\vec{e}_{1}+\frac{2^{1-1/q}+1}{(1+(2^{1-1/q}+1)^{q})^{1/q}}\vec{e}_{l}\in\B_{d}$. It can be seen that $\bar{\x}\in\B_{q}^{d}$; moreover, since $\frac{1}{p}+\frac{1}{q}=1$,
\begin{align*}
\A_{1}\bar{\x}&=-\frac{1}{2^{1/p}}\cdot\frac{1}{(1+(2^{1-1/q}+1)^{q})^{1/q}}+\frac{1}{2^{1/p}}\cdot\frac{2^{1-1/q}+1}{(1+(2^{1-1/q}+1)^{q})^{1/q}}=\frac{1}{(1+(2^{1-1/q}+1)^{q})^{1/q}}; \\
\A_{i}\bar{\x}&=\frac{1}{2^{1/p}}\cdot\frac{1}{(1+(2^{1-1/q}+1)^{q})^{1/q}}+\frac{1}{2^{1/p}}>\frac{1}{(1+(2^{1-1/q}+1)^{q})^{1/q}}\quad\forall\,i\in\range{n}/\{1,k\}; \\
\A_{k}\bar{\x}&=1\cdot\frac{1}{(1+(2^{1-1/q}+1)^{q})^{1/q}}+0\cdot\frac{2^{1-1/q}+1}{(1+(2^{1-1/q}+1)^{q})^{1/q}}=\frac{1}{(1+(2^{1-1/q}+1)^{q})^{1/q}}.
\end{align*}
In all, $\sigma_{1}\geq\min_{i\in\range{n}}\A_{i}\bar{\x}=\frac{1}{(1+(2^{1-1/q}+1)^{q})^{1/q}}$.

On the other hand, for any $\x=(\x_{1},\ldots,\x_{d})\in\B_{d}$, we have
\begin{align}\label{eqn:perceptron-lower-n-1}
\min_{i\in\range{n}}\A_{i}\x=\min\Big\{-\frac{1}{2^{1/p}}\x_{1}+\frac{1}{2^{1/p}}\x_{l},\frac{1}{2^{1/p}}\x_{1}+\frac{1}{2^{1/p}}\x_{l},\x_{1}\Big\}.
\end{align}
If $\x_{1}\leq\frac{1}{(1+(2^{1-1/q}+1)^{q})^{1/q}}$, then \eqn{perceptron-lower-n-1} implies that $\min_{i\in\range{n}}\A_{i}\x\leq\frac{1}{(1+(2^{1-1/q}+1)^{q})^{1/q}}$; if $\x_{1}\geq\frac{1}{(1+(2^{1-1/q}+1)^{q})^{1/q}}$, then
\begin{align}
\x_{l}\leq(1-\x_{1}^{q})^{1/q}=\frac{2^{1-1/q}+1}{(1+(2^{1-1/q}+1)^{q})^{1/q}},
\end{align}
and hence by \eqn{perceptron-lower-n-1} we have
\begin{align}
\min_{i\in\range{n}}\A_{i}\x\leq-\frac{1}{2^{1/p}}\x_{1}+\frac{1}{2^{1/p}}\x_{l}&\leq-\frac{1}{2^{1/p}}\cdot\frac{1}{(1+(2^{1-1/q}+1)^{q})^{1/q}}+\frac{1}{2^{1/p}}\cdot\frac{2^{1-1/q}+1}{(1+(2^{1-1/q}+1)^{q})^{1/q}} \nonumber \\
&=\frac{1}{(1+(2^{1-1/q}+1)^{q})^{1/q}}.
\end{align}
In all, we always have $\min_{i\in\range{n}}\A_{i}\x\leq\frac{1}{(1+(2^{1-1/q}+1)^{q})^{1/q}}$. As a result, $\sigma_{1}=\max_{\x\in\B_{q}^{d}}\min_{i\in\range{n}}\A_{i}\x\leq\frac{1}{(1+(2^{1-1/q}+1)^{q})^{1/q}}$. In conclusion, we have $\sigma_{1}=\frac{1}{(1+(2^{1-1/q}+1)^{q})^{1/q}}$.
\\\\\indent
Now, we prove that an $\bar{\x}\in\B_{q}^{d}$ satisfying \eqn{matrix-goal-rewrite} would simultaneously reveal whether $\A$ is from Case 1 or Case 2 as well as the value of $l\in\{2,\ldots,d\}$, by the following algorithm:
\begin{enumerate}
\item Check if one of $\bar{\x}_{2},\ldots,\bar{\x}_{d}$ is at least $1-0.04\cdot 2^{1/p}$; if there exists an $l'\in\{2,\ldots,d\}$ such that $\bar{\x}_{l'}\geq 1-0.04\cdot 2^{1/p}$, return `Case 2' and $l=l'$;
\item Otherwise, return `Case 1' and $l=\arg\max_{i\in\{2,\ldots,d\}}\bar{\x}_{i}$.
\end{enumerate}

We first prove that the classification of $\A$ between Case 1 and Case 2 is correct. On the one hand, assume that $\A$ comes from Case 1. If we wrongly classified $\A$ as from Case 2, we would have $\bar{\x}_{l'}\geq 1-0.04\cdot 2^{1/p}$ and $\bar{\x}_{1}\leq (1-(1-0.04\cdot 2^{1/p})^{q})^{1/q}$. We denote
\begin{align}
f_1(q):=1-(1-0.04\cdot 2^{1-1/q})^{q}-\Big(\frac{1}{(1+(2^{1-1/q}+1)^{q})^{1/q}}-0.04\Big)^{q}.
\end{align}
It can shown that $f_{1}$ is a decreasing function on $[1,2]$; furthermore, $f_{1}(2)>0$. See \fig{f_1q}.
\begin{figure}[htbp]
\centering
\includegraphics[width=0.6\textwidth]{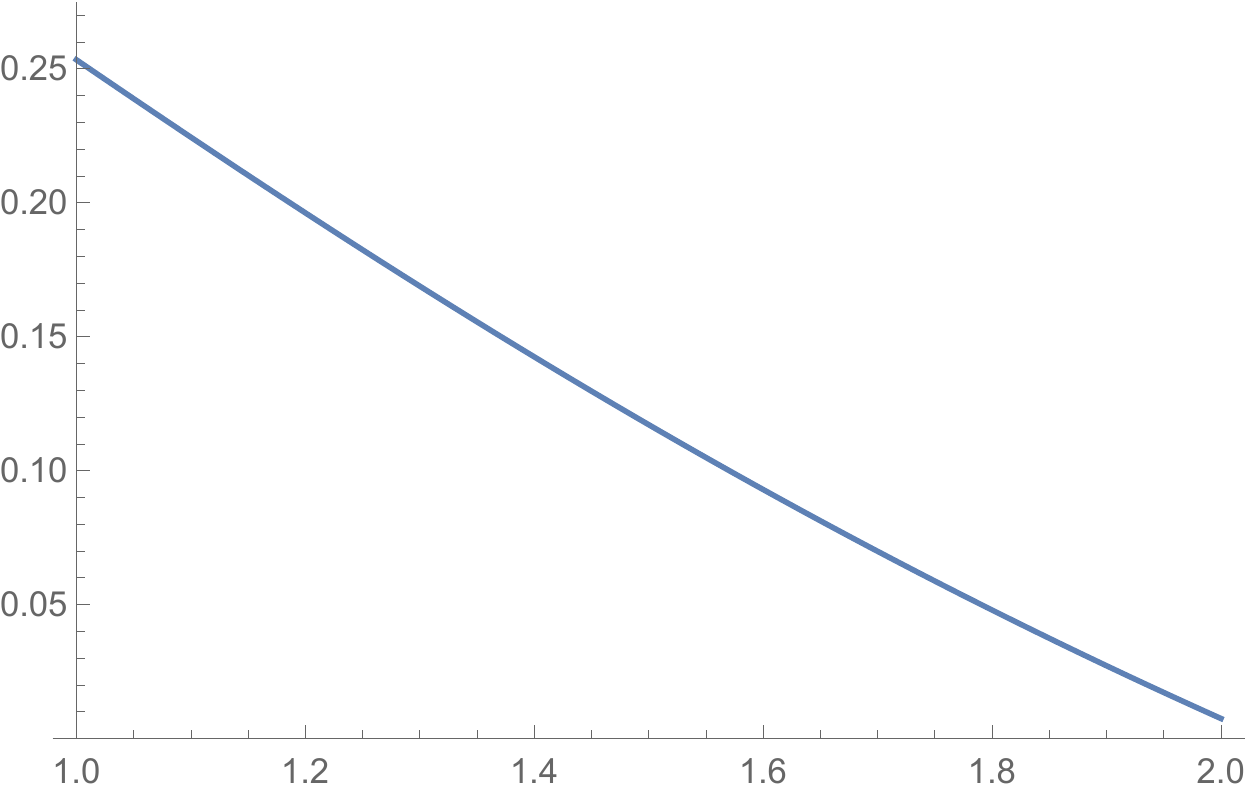}
\caption{The plot of $f_{1}$, where the $x$-axis represents $q$ and the $y$-axis represents $f_{1}(q)$.}
\label{fig:f_1q}
\end{figure}
As a result, $f_{1}(q)>0$, which implies
\begin{align}
\min_{i\in\range{n}}\A_{i}\bar{\x}=\min\Big\{-\frac{1}{2^{1/p}}\bar{\x}_{1}+\frac{1}{2^{1/p}}\bar{\x}_{l},\frac{1}{2^{1/p}}\bar{\x}_{1}+\frac{1}{2^{1/p}}\bar{\x}_{l},\bar{\x}_{1}\Big\}\leq \bar{\x}_{1}<\sigma_{1}-\epsilon.
\end{align}
However, this contradicts with \eqn{matrix-goal-rewrite}. Therefore, for this case we must make the correct classification that $\A$ comes from Case 1.

On the other hand, assume that $\A$ comes from Case 2. If we wrongly classified $\A$ as from Case 1, we would have $\bar{\x}_{l}\leq\max_{i\in\{2,\ldots,d\}}\bar{\x}_{i}<1-0.04\cdot 2^{1/p}$; this would imply
\begin{align}
\min_{i\in\range{n}}\A_{i}\bar{\x}=\min\Big\{-\frac{1}{2^{1/p}}\bar{\x}_{1}+\frac{1}{2^{1/p}}\bar{\x}_{l},\frac{1}{2^{1/p}}\bar{\x}_{1}+\frac{1}{2^{1/p}}\bar{\x}_{l}\Big\} \leq\frac{1}{2^{1/p}}\bar{\x}_{l}<\sigma_{2}-\epsilon,
\end{align}
which contradicts with \eqn{matrix-goal-rewrite}. Therefore, for this case we must make the correct classification that $\A$ comes from Case 2. In all, our classification is always correct.

It remains to prove that the value of $l$ is correct. If $\A$ is from Case 1, we have
\begin{align}
\sigma_{1}-\epsilon\leq\min_{i\in\range{n}}\A_{i}\bar{\x}=\min\Big\{-\frac{1}{2^{1/p}}\bar{\x}_{1}+\frac{1}{2^{1/p}}\bar{\x}_{l},\frac{1}{2^{1/p}}\bar{\x}_{1}+\frac{1}{2^{1/p}}\bar{\x}_{l},\bar{\x}_{1}\Big\};
\end{align}
as a result, $\bar{\x}_{1}\geq\sigma_{1}-\epsilon$ and $-\frac{1}{2^{1/p}}\bar{\x}_{1}+\frac{1}{2^{1/p}}\bar{\x}_{l}>\sigma_{1}-\epsilon$, which imply
\begin{align}
\bar{\x}_{l}>(2^{1/p}+1)(\sigma_{1}-\epsilon)>(2^{1-1/q}+1)\Big(\frac{1}{(1+(2^{1-1/q}+1)^{q})^{1/q}}-0.04\Big).
\end{align}
We denote
\begin{align}
f_{2}(q):=2(2^{1-1/q}+1)^{q}\Big(\frac{1}{(1+(2^{1-1/q}+1)^{q})^{1/q}}-0.04\Big)^{q}.
\end{align}
It can shown that $f_{2}$ is an increasing function on $[1,2]$; furthermore, $f_{1}(1)>1$. See \fig{f_2q}.
\begin{figure}[htbp]
\centering
\includegraphics[width=0.6\textwidth]{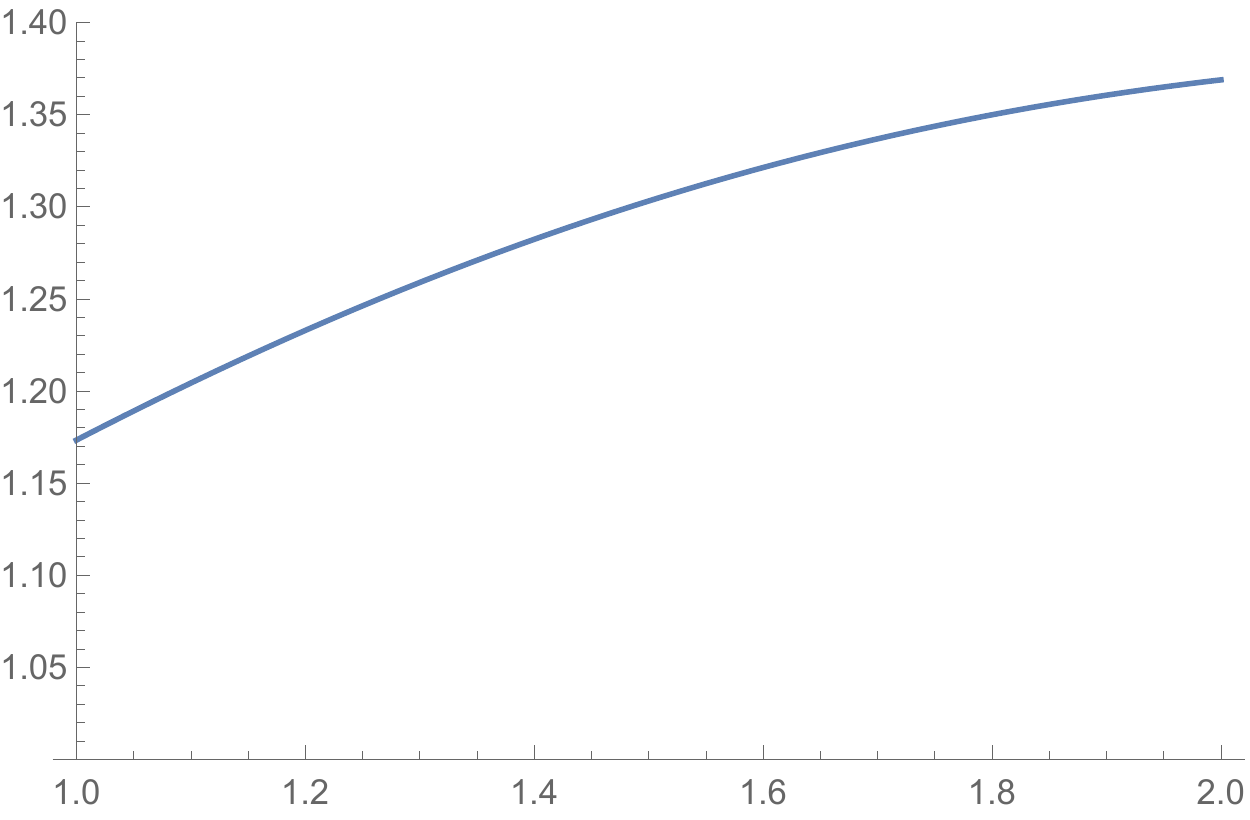}
\caption{The plot of $f_{2}$, where the $x$-axis represents $q$ and the $y$-axis represents $f_{2}(q)$.}
\label{fig:f_2q}
\end{figure}
As a result, $f_{2}(q)>1$, which implies $|\bar{\x}_{l}|^{q}>1/2$. Therefore, $\bar{\x}_{l}$ must be the largest among $\bar{\x}_{2},\ldots,\bar{\x}_{d}$ (otherwise $l'=\arg\max_{i\in\{2,\ldots,d\}}\bar{\x}_{i}$ and $l\neq l'$ would imply $\|\bar{\x}\|_{q}^{q}=\sum_{i\in\range{d}}|\bar{\x}_{i}|_{q}^{q}\geq |\bar{\x}_{l}|^{q}+|\bar{\x}_{l'}|^{q}\geq 2|\bar{\x}_{l}|^{q}>1$, contradiction). Therefore, Line 2 of the algorithm correctly returns the value of $l$.

If $\A$ is from Case 2, we have
\begin{align}
\sigma_{2}-\epsilon\leq\min_{i\in\range{n}}\A_{i}\bar{\x}=\min\Big\{-\frac{1}{2^{1/p}}\bar{\x}_{1}+\frac{1}{2^{1/p}}\bar{\x}_{l},\frac{1}{2^{1/p}}\bar{\x}_{1}+\frac{1}{2^{1/p}}\bar{\x}_{l}\Big\}\leq \frac{1}{2^{1/p}}\bar{\x}_{l},
\end{align}
and hence $\bar{\x}_{l}\geq 2^{1/p}(\sigma_{2}-\epsilon)\geq 2^{1/p}(\frac{1}{2^{1/p}}-0.04)>1-0.04\cdot 2^{1/p}$. Since $\frac{1}{p}+\frac{1}{q}=1$, we have
\begin{align}
1-0.04\cdot 2^{1/p}=1-0.08\cdot 2^{-1/q}>2^{-1/q},
\end{align}
where the last inequality comes from the fact that $2^{-1/q}\leq 2^{-1/2}=\frac{1}{\sqrt{2}}<\frac{1}{1.08}$ for any $q\in [1,2]$. Therefore, $2(1-0.04\cdot 2^{1/p})^{q}>1$, and only one coordinate of $\bar{\x}$ could be at least $1-0.04\cdot 2^{1/p}$ and we must have $l=l'$. Therefore, Line 1 of the algorithm correctly returns the value of $l$.

In all, we have proved that an $\epsilon$-approximate solution $\bar{\x}\in\B_{q}^{d}$ for \eqn{matrix-goal-rewrite} would simultaneously reveal whether $A$ is from Case 1 or Case 2 as well as the value of $l\in\{2,\ldots,d\}$. As a result:
\begin{itemize}
\item Classically: On the one hand, notice that distinguishing these two cases requires $n-2$ classical queries to the entries of $A$ for searching the position of $k$; therefore, it gives an $\Omega(n)$ classical query lower bound for returning an $\bar{\x}$ that satisfies \eqn{matrix-goal-rewrite}. On the other hand, finding the value of $l$ is also a search problem on the entries of $A$, which requires $d-1=\Omega(\sqrt{d})$ queries. These observations complete the proof of the classical lower bound in \thm{matrix-lower}.
\item Quantumly: On the one hand, notice that distinguishing these two cases requires $\Omega(\sqrt{n-2})=\Omega(\sqrt{n})$ quantum queries to $O_{\A}$ for searching the position of $k$ because of the quantum lower bound for search \cite{bennett1997strengths}; therefore, it gives an $\Omega(\sqrt{n})$ quantum lower bound on queries to $O_{\A}$ for returning an $\bar{\x}$ that satisfies \eqn{matrix-goal-rewrite}. On the other hand, finding the value of $l$ is also a search problem on the entries of $\A$, which requires $\Omega(\sqrt{d-1})=\Omega(\sqrt{d})$ quantum queries to $O_{\A}$ also due to~\citet{bennett1997strengths}. These observations complete the proof of the quantum lower bound in \thm{matrix-lower}.
\end{itemize}
\end{proof}

\end{document}